\newtheorem{theorem}{Theorem}
\newtheorem{corollary}[theorem]{Corollary}
\newtheorem{lemma}[theorem]{Lemma}
\newtheorem{remark}{Remark}
\newtheorem{definition}{Definition}
\newcommand{\clique}{\ensuremath{\mathsf{Clique~}}}
\newcommand{\cliquefull}{\ensuremath{\mathsf{Congested ~Clique~}}}
\newcommand{\fclique}{\ensuremath{\mathsf{Faulty~Clique~}}}
\newcommand{\congest}{\ensuremath{\mathsf{Congest~}}}
\newcommand{\f}{\varphi} 
\newcommand{\de}{\delta} 
\newcommand{\g}{\mathrm{Enc}} 
\newcommand{\bw}{c} 
\newcommand{\bin}{bin} 
\newcommand{\mult}{mult}
\newcommand{\remainF}{F_{remain}}
\title{Computing in a Faulty Congested Clique}
\author{
Keren Censor-Hillel\\
Technion\\ 
Israel\\
\textit{ckeren@cs.technion.ac.il}\\
	\and
Pedro Soto\\
Virginia Tech \\
USA\\
\textit{pedrosoto@vt.edu}\\
}
\date{}
\begin{document}
\maketitle

\begin{abstract}
We study a \textsf{Faulty Congested Clique} model, in which an adversary may fail nodes in the network throughout the computation. 
We show that any task of $O(n\log{n})$-bit input per node can be solved in roughly $n$ rounds, where $n$ is the size of the network. This nearly matches the linear upper bound on the complexity of the non-faulty \textsf{Congested Clique} model for such problems, by learning the entire input, and it holds in the faulty model even with a linear number of faults.
    
Our main contribution is that we establish that one can do much better by looking more closely at the computation. Given a deterministic algorithm $\mathcal{A}$ for the non-faulty \textsf{Congested Clique} model, we show how to transform it into an algorithm $\mathcal{A}'$ for the faulty model, with an overhead that could be as small as some logarithmic-in-$n$ factor, by considering refined complexity measures of $\mathcal{A}$. 

As an exemplifying application of our  approach, we show that the $O(n^{1/3})$-round complexity of semi-ring matrix multiplication [Censor{-}Hillel, Kaski, Korhonen, Lenzen, Paz, Suomela, PODC 2015] remains the same up to polylog factors in the faulty model, even if the adversary can fail $99\%$ of the nodes (or any other constant fraction).
\end{abstract}

\tableofcontents

\section{Introduction}
Distributed systems are prone to failures by their nature, and thus coping with faults in distributed computing has been extensively studied since the dawn of this research area \cite{AttiyaWelch, Lynch96, peleg2000distributed}.   
In this work, we address the \cliquefull model (or \clique for short) \cite{LotkerPPP05}, in which $n$ nodes of a network communicate in synchronous rounds by exchanging $O(\log{n})$-bit messages between every pair of nodes in each round. This model is heavily studied, through the lens of various computing tasks, such as algebraic computations \cite{Censor-HillelKK19, Gall16, Censor-HillelLT20}, graph problems such as computing an MST \cite{LotkerPPP05, HegemanPPSS15, GhaffariP16, Korhonen16, Jurdzinski018, Nowicki21a}, computing distances and spanners \cite{HenzingerKN16, Nanongkai14, DoryFKL21, Censor-HillelDK21, DoryP22, BuiCCDL24, ParterY18}, computing local tasks \cite{HegemanPS14, Ghaffari17, GhaffariGKMR18, Censor-HillelPS20, CzumajDP21, ParterS18, Parter18, CoyCDM23}, optimization and approximation algorithms \cite{HegemanPS14, GhaffariN18, GhaffariJN20}, subgraph finding \cite{DruckerKO13, DolevLP12, Pandurangan0S18, Censor-HillelFG22, IzumiG17, IzumiG19, FischerGKO18, Censor-HillelFG20, Censor-HillelGL20} and many more \cite{Patt-ShamirT11, Lenzen13, HegemanP15, Censor-HillelMP21} (see Section \ref{appendix:relaredWork} for additional related work). 

We consider deterministic algorithms, and study a faulty version of the \clique model, which we refer to as the \fclique model. In this model an adversary may fail nodes in the network throughout the computation, such that a failed node cannot continue to participate in the computation. We capture the budget of failures that the adversary has as a parameter $c$, such that at least $n/c$ nodes must remain non-faulty. Thus, the adversary may fail $(\frac{c-1}{c})n$ nodes throughout the execution of an algorithm (see Section \ref{sec:preliminaries} for a  formal definition of the model and the additional concepts used throughout the paper). We ask:

\begin{center}
\emph{How efficiently can one compute in the \fclique model?}
\end{center}

Our contribution is a scheme that converts any \clique algorithm into a \fclique algorithm. The goal of such a scheme is to incur the smallest possible overhead to the round complexity of the algorithm. We exemplify the strength of our scheme by showing that it allows multiplying two matrices over a semi-ring in $\tilde{O}(n^{1/3})$ rounds, thus retaining its non-faulty complexity from \cite{Censor-HillelKK19} up to polylog factors. This is \emph{more than quadratically faster} compared to the complexity of $n^{2/3+o(1)}$ that can be obtained for this problem from the follow-up work of \cite{CFGS25}.

\paragraph{Let's dive into some background.} Without further restrictions on the adversary, it may fail a node at the very first round of the algorithm, preventing the others from ever completing the computation since they can never access the failed node's input. Note that we do not allow loss of information since our requirement is that the output depends on the original inputs, as opposed to some other fault models that allow the output to depend only on the inputs of non-faulty nodes. 
Thus, the \fclique model must allow at least one initial \emph{quiet round}, in which no faults can occur. We therefore consider the number of quiet rounds that are needed for a \fclique algorithm as an additional complexity measure which we aim to minimize. We emphasize that our method will require a very small constant number of quiet rounds. This does not render the model trivial for any task which can be solved in $O(1)$ rounds in the non-faulty \clique model, because if its exact complexity is larger than our constant then it cannot be entirely executed during the quiet rounds (also note that we do not modify the bandwidth by any constant, but rather we stick to a certain given bandwidth of $b\log n$ bits, as we cannot change the communication model based on the task that we wish to solve).\footnote{\cite{CFGS25} work with already-coded inputs and thus avoid the notion of quiet rounds. Here, we care about how the input is coded and hence we need these constant number of preprocessing rounds.}

Similarly, the adversary can fail a node at the end of the computation, preventing its output from being accessible. Thus, in the \fclique model, the output requirement is modified. Yet, some care should be taken when doing so. We denote by $\mathcal{A}_w$ the output that node $w$ should hold in a non-faulty execution of $\mathcal{A}$, and we would like to require that for every node $w$ there is a node $u$ which holds $\mathcal{A}_w$. However, this is still insufficient because the adversary may now fail $u$. Instead, we demand that for every $w$, the output $\mathcal{A}_w$ is encoded in the network such that any node $u$ can obtain it within some number of $R$ rounds of communication, even if additional nodes fail, as long as $u$ itself does not fail. We call this the decodability complexity of the algorithm. Thus, a \fclique algorithm has three complexity measures: the number of quiet rounds, the number of additional (not necessarily quiet) rounds, and the number of rounds required for obtaining an output.

Our contribution is a method for computing in the \fclique model that yields the following. First, a simplified usage of it gives that any task that has $O(n\log{n})$ bits of input per node can be solved in $\tilde{O}(n)$ rounds in the \fclique model, for constant $c$ (our results hold for larger values of $c$ as well, with the actual dependence on $c$ being polynomial in $c$). This nearly matches the upper bound of linear round complexity of computing such tasks in the non-faulty \clique model, and is similarly obtained by learning all the inputs.

Second, our main result improves upon this greatly for computing certain functions $f$: To this end, we identify two particular parameters which, informally,  capture the locality of communication and the locality of computation that a certain computation requires, and show that our result obtains a super-fast computation of functions for which these parameters are not large (see Section \ref{sec:preliminaries} for formal definitions). For such functions, the complexity overhead of our approach can be as low as $\tilde{O}(1)$ for a constant $c$. This insight is what allows us to compute the semi-ring product of two matrices, whose complexity in the \clique model is $O(n^{1/3})$ due to \cite{Censor-HillelKK19}, in $\tilde{O}(n^{1/3})$-rounds in the \fclique model.


\subsection{Our Contribution}
\label{subsec:contribution}
Our treatment of \clique and \fclique algorithms goes through circuits, as was also done in the follow-up work of \cite{CFGS25}, and is aligned with prior work of \cite{DruckerKO13} which show how to compute circuits with \clique algorithms. For completeness and for setting the ground with the terminology that we need, we define in Section \ref{sec:preliminaries} the concept of \emph{layered circuits}. Roughly speaking, these are circuits whose gates can be nicely split into layers, with wires only between two consecutive layers. When we compute a circuit by a \clique or \fclique algorithm, we assign each node with the task of computing some of the gates in a layer, which we refer to as this node's \emph{part}. Note that a node always refers to a computing component in the \clique or \fclique model, while we use gates and wires to refer to circuits. 

Layered circuits can be parametrized by what we call their \emph{parallel partition parameter}. Informally, this parameter relates to the number of gates that have wires into and from the gates of any node's part.
This captures the amount of ``communication'' between layers when we later interpret them as \clique algorithms. Intuitively, we say that a layered circuit has a parallel partition of $n^{\f}$ when its gates have wires into/from $O(n^{1+\f})$ gates in other parts. Thus, when we compute a layer of the circuit by a \clique algorithm, the multiplicative overhead beyond the $n$ elements that each node can send/receive in a round is $O(n^{\f})$. This means that the \clique algorithm will incur this number of rounds per layer that it computes. It is not hard to show that any \clique algorithm that runs in $T$ rounds has a circuit of depth $O(T)$ with a 1-parallel partition ($\f=0$). Notably, this \emph{does not} depend on any non-adaptivity assumption on the \clique algorithm (that is, this applies also to algorithms in which the communication pattern may depend on the inputs). We show this in Lemma \ref{lem:clique_to_circ} in Section \ref{appendix:clique2circuit}.

Our key insight, is that for handling faults in the \fclique model, we can exploit additional parameters of a circuit that we wish to compute. To this end, we define two additional parameters, which we call \emph{communication locality} and \emph{computation locality}. These help us capture the amount of ``locality'' between layers when we later interpret them as \fclique algorithms. Informally, the computation locality of a layered circuit is $n^{\zeta}$ when in each layer, each of the $n$ parts can be split to $O(n^{\zeta})$ pieces of $n$ gates (for a total of at most $O(n^{1+\zeta})$ gates). The communication locality is $n^{\xi}$, when for each of the $n$ parts, the number of pieces of other parts that have incoming wires into it is at most $O(n^{\xi})$.

Intuitively, the {computation locality} measures the amount of information that a node needs to store in order to proceed with computing a layer of a circuit. The reason that this plays a role is because in the \fclique model we will need a non-faulty node $u$ to simulate the parts of the computation of a faulty node $w$. To do that, the non-faulty node $u$ will need to recover information that corresponds to the state of the faulty node $w$. However, it may be that $w$ does not need its  entire state, and that it can instead be compressed or partially ``forgotten''. This will allow a better complexity for the \fclique algorithm.
The {communication locality} captures the number of $\Theta(n)$-sized pieces of other nodes, which a non-faulty node $u$ has to get in order to simulate the computation of a faulty node $w$ in a layer. That is, our algorithm would benefit from a circuit in which if a node $u$ needs information from a node $w$ then it needs entire pieces of $n$ bits rather than smaller chunks of information. One can think of the information of $w$ as checkpointed ``in the cloud'', and a smaller communication locality will also reduce the complexity of the \fclique algorithm. 

The reason that communication locality only addresses incoming wires is that 
the parallel partition parameter $n^{\f}$ already takes care of communication regardless of faults. However, in the \fclique model we cannot trust the ability of a node to send its outgoing messages because it or other nodes may be faulty. Thus, it still remains to capture the number of coded pieces that a node has to collect in order to compute its gates in the upcoming layer.

We are now ready to state our main result (proven in Section \ref{sec:circuitTOfclique}). For simplicity, we assume throughout the paper that $c$ is a constant (our results apply to all values of $c$ with an additional dependence on $c$). We prove that every layered circuit can be computed by a \fclique algorithm, with $O(1)$ quiet rounds and decodability, and with a round complexity that depends on the above parameters, as follows.

\begin{restatable}[Computing a layered circuit by a \fclique algorithm]{theorem}{ThmCiruitToFclique}
\label{thm:circ_to_clique_coded}
    Let $\mathcal{C}$ be a circuit of depth $n^\de$ with alphabet size $|\Sigma| = 
    b \log(n)$
    that has an $n^{\f}$-parallel partition with computation locality $n^{\zeta}$ and communication locality $n^{\xi}$. Let $\mu$ be such that $O(n^{1+\mu})$ bounds the max size of the output per part.
    For every constant $\bw$, there is an algorithm $\mathcal{A}$ in the $\bw$-\fclique 
    that computes the function $f_\mathcal{C}$ with $O(1)$ quiet rounds, a (non-quiet) round complexity of $O(c^2(n^{\de+\f+\eta}+n^{\mu})\log{n})$, where $\eta=\max\{\zeta-\f, \xi-\f\}$, and decodability complexity of $R=O( n^{\mu})$.
\end{restatable}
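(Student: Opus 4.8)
The plan is to simulate $\mathcal{C}$ layer by layer while maintaining, before and after each layer, a \emph{fault-tolerant checkpoint} of all current gate values: each part's $O(n^{1+\zeta})$ values are split into $O(n^{\zeta})$ pieces of $n$ symbols, and each such piece is stored as a rate-$\Theta(1/\bw)$ MDS codeword of $\Theta(\bw n)$ symbols, scattered $O(\bw)$-per-node over the $n$ nodes. Two properties then drive everything: (a) since at least $n/\bw$ nodes survive at any time, the survivors jointly hold $\ge n$ code symbols of every piece, so every piece can be reconstructed from the survivors; and (b) since a piece is spread essentially one symbol per node, any single node can collect the $n$ symbols it needs for one piece in $O(1)$ rounds of load-balanced routing. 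The parameters of $\mathcal{C}$ enter through how many pieces a node must touch per layer, as made precise below.

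\textbf{Quiet rounds.} During the $O(1)$ quiet rounds we build the checkpoint for layer $0$ (the input). As the circuit input amounts to $O(n)$ symbols per part (i.e. $O(n\log n)$ bits per node, matching the abstract), each node can locally compute the $\Theta(\bw n)$ MDS symbols of each of its $O(1)$ input pieces and scatter them in a constant number of all-to-all exchanges; no faults occur here, so this costs $O(1)$ quiet rounds. This is exactly the step whose encoding details matter (cf. Section~\ref{subsec:contribution}): the code must simultaneously be recoverable from \emph{any} $n/\bw$ survivors, have small ($n$-symbol) pieces so that later collection is cheap, and be producible in $O(1)$ rounds.

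\textbf{Advancing one layer.} To go from layer $\ell$ to $\ell+1$: (i) one heartbeat round lets every node learn which parts are \emph{orphaned} (owner failed), and the survivors deterministically re-adopt them by rank, so each survivor is responsible for $O(\bw)$ parts; (ii) for each responsible part a node reconstructs, from the layer-$\ell$ checkpoint, the $O(n^{\zeta})$ pieces of carried state it needs (computation locality) together with the $O(n^{\xi})$ incoming pieces from the feeding parts (communication locality) — a total of $O(n^{\max\{\zeta,\xi\}})=O(n^{\f+\eta})$ pieces, using that $\f+\eta=\max\{\zeta,\xi\}$ — and evaluates that part's layer-$(\ell+1)$ gates; (iii) the new values are re-encoded as above and scattered, forming the layer-$(\ell+1)$ checkpoint. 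Steps (ii) and (iii) run through a routing subroutine that is itself robust to failures occurring \emph{during} the layer: a transfer interrupted by a failure is re-served from another holder of the $\Theta(\bw)$-redundant codeword, and a pipelining/doubling argument bounds the overhead by $O(\log n)$. Accounting: a responsible node handles $O(\bw)$ parts, each entailing $O(n^{1+\f+\eta})$ raw symbols, which the $\Theta(\bw)$-fold coding turns into $O(\bw^{2}n^{1+\f+\eta})$ code symbols, i.e. $O(\bw^{2}n^{\f+\eta}\log n)$ rounds per layer; over the $n^{\de}$ layers this gives $O(\bw^{2}n^{\de+\f+\eta}\log n)$ rounds, all using only the initial $O(1)$ quiet rounds.

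\textbf{Output, and the main obstacle.} The layer-$n^{\de}$ checkpoint already encodes the outputs; a one-time re-encoding at the right granularity costs $O(\bw^{2}n^{\mu}\log n)$ rounds (the output per part is $O(n^{1+\mu})$ symbols), which produces the additive $n^{\mu}$ term in the round complexity, and afterwards any surviving node $u$ recovers the output of any part $w$ by collecting the $O(n^{1+\mu})$ code symbols of $w$'s output pieces over its $n$ links, i.e. in $R=O(n^{\mu})$ rounds, tolerating further faults by redundancy of the code. I expect the main obstacle to be making the within-layer reconstruction robust to \emph{adaptively timed} failures — including failures of the nodes that are currently reconstructing pieces or re-encoding outputs — while losing only a $\log n$ factor, and doing so with deterministic, survivor-consistent bookkeeping of who owns, re-serves, and re-adopts each piece. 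The quadratic dependence on $\bw$ seems intrinsic to this scheme: one factor of $\bw$ because a survivor may be responsible for $\bw$ parts, and the other from the $\Theta(\bw)$-fold redundancy of the checkpoint.
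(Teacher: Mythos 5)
Your high-level architecture matches the paper's: shuffle inputs in the quiet rounds, checkpoint each part's state as MDS codewords recoverable from any $n/\bw$ survivors, have survivors reconstruct what they need (counted via the communication and computation localities), compute, and re-encode; your accounting of where the two factors of $\bw$, the $\log n$, and the additive $n^{\mu}$ term come from is also in the right ballpark. (The paper checkpoints only at epoch boundaries rather than every layer, but since $\zeta,\xi\le\f+\eta$ this is not where the difficulty lies.)

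However, there is a genuine gap exactly at the point you flag as "the main obstacle": you assert that a survivor-by-rank re-adoption of orphaned parts plus "a pipelining/doubling argument" bounds the overhead of adaptively timed failures by $O(\log n)$, but you never give that argument, and the naive version fails. The adversary can repeatedly crash precisely the node that has just adopted an orphaned part, mid-reconstruction, forcing a fresh re-adoption each time; with a budget of $(\frac{\bw-1}{\bw})n$ crashes this yields up to $\Theta(n)$ sequential re-simulations of a single part, not $O(\log n)$. This is the central technical problem the paper solves, and it requires more than redundancy in the code: the paper proceeds in \emph{attempts}, distinguishing the regime where the number of unfinished parts exceeds the number of survivors (then one part per survivor, so at least $n/\bw$ parts complete per attempt, giving at most $\bw$ such attempts) from the regime where it does not (then unfinished parts are grouped into batches of size $\Theta(\bw)$ and each batch is replicated across a multiplicity of survivors, and a potential argument shows that in each attempt either the adversary spends half its remaining fault budget or at least a quarter of the batches complete, giving $O(\log n)$ such attempts). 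Without this replication-with-batching and the two-case counting argument, your per-layer round bound is unsubstantiated, so the claimed $O(\bw^{2}n^{\de+\f+\eta}\log n)$ total does not follow from what you have written.
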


The layered circuit we obtain in Lemma \ref{lem:clique_to_circ} for a general \clique algorithm with $T$ rounds has a depth of $O(T)$, which is $n^{\de}$ for $\de=\log_n{T}$, has an $n^\f$-parallel partition with $\f=0$, communication locality of $n^{\zeta}$ for $\zeta=\log_n{T}$, and computation locality of $n^{\xi}$ for $\xi=1$. Thus, Theorem \ref{thm:circ_to_clique_coded} gives us a $c$-\fclique algorithm with a round complexity of $O(Tn\log{n})$, which is a roughly linear-in-$n$ overhead for a constant $c$. This is expensive for a general function $f$, as one can use a trivial circuit that has depth 1 to get a $\tilde{O}(n)$ round complexity (in which one can verify that $\delta=0$,  $\mu=0$, $\varphi=1$, $\xi=0$, $\zeta=1$, and hence $\eta=0$), 
by essentially having each node learn all inputs. Yet, if there is a function $f$ for which one can construct a better circuit than the general construction of Lemma \ref{lem:clique_to_circ}, then one can get a smaller overhead compared with the \clique algorithm. Indeed, this is exactly what we exemplify for semi-ring matrix multiplication in Section \ref{appendix:semi-ringMM}.

\begin{restatable}[A \fclique algorithm for semi-ring matrix multiplication]{theorem}{ThmMatrix}
\label{thm:matrix}
Suppose that the inputs on the nodes are matrices 
$A, B \in \Sigma^{n\times n }$,
where each node begins with $n$ coefficients of $A$ and $B$ and $\Sigma $ is a semi-ring, then the value $C = A \cdot B$ can be computed in the \bw-\fclique with $O(1)$ quiet rounds, a round complexity of  $O(c^2 n^{1/3}\log n)$, and $O(1)$-decodability. 
\end{restatable}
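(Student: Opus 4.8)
The plan is to instantiate Theorem~\ref{thm:circ_to_clique_coded} with a carefully chosen layered circuit $\mathcal{C}$ that computes the semi-ring product $C = A\cdot B$, and then verify that its five parameters $(\delta,\varphi,\zeta,\xi,\mu)$ are small enough to yield the claimed bound. Concretely, I want a circuit for which $\delta + \varphi + \eta = 1/3$ (where $\eta = \max\{\zeta-\varphi,\xi-\varphi\}$) and $\mu = 0$, so that the round complexity $O(c^2(n^{\delta+\varphi+\eta}+n^\mu)\log n)$ collapses to $O(c^2 n^{1/3}\log n)$ and the decodability complexity $R = O(n^\mu)$ becomes $O(1)$. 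The starting point is the $O(n^{1/3})$-round \clique algorithm of \cite{Censor-HillelKK19}: it partitions the $n\times n$ matrices into $n^{1/3}\times n^{1/3}$ blocks of submatrices, each of dimension $n^{2/3}\times n^{2/3}$, assigns each of the $n$ nodes one block-product $A_{ik}\cdot B_{kj}$ to compute locally (each such block has $n^{4/3}$ entries, which exceeds the $O(n)$ bandwidth, so routing it costs the $n^{1/3}$ factor), and then sums the $n^{1/3}$ partial products contributing to each output block $C_{ij}$.

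\textbf{Building the circuit.} I would lay this out as a constant-depth layered circuit. Layer~$0$ holds the inputs; a routing/copy layer moves the needed submatrix entries into the parts responsible for each of the $n$ triples $(i,j,k)$; a computation layer has each part compute its $n^{2/3}\times n^{2/3}$ block product (a fan-in-$n^{2/3}$ sum-of-products gate array); and a final summation layer aggregates, for each of the $n^{2/3}$ output blocks $C_{ij}$, the $n^{1/3}$ partial block-products into the output. Since depth is $O(1)$ we have $\delta = 0$. The parallel partition parameter: each part needs two input submatrices of $n^{2/3}\times n^{2/3} = n^{4/3}$ entries, i.e.\ $O(n^{1+1/3})$ gates' worth of wires into/from other parts, so $\varphi = 1/3$. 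The computation locality: a part's block product has $n^{4/3}$ gates, splittable into $n^{1/3}$ pieces of $n$ gates each, giving $\zeta = 1/3$. The communication locality: the $n^{4/3}$-entry submatrix a part needs can be arranged (by choosing how the input is coded in the quiet rounds, and how the matrices are tiled) so that it consists of $O(n^{1/3})$ pieces of size $\Theta(n)$ coming from other parts, giving $\xi = 1/3$. Hence $\eta = \max\{\zeta - \varphi,\ \xi - \varphi\} = \max\{0,0\} = 0$, and $\delta + \varphi + \eta = 1/3$. Finally each node's output is one $n^{2/3}\times n^{2/3}$ block, of size $n^{4/3}$; but the theorem measures $\mu$ as $O(n^{1+\mu})$ bounding the \emph{output per part}, and here I would instead distribute the output so that each of the $n$ parts holds $O(n)$ entries of $C$ (there are $n^2$ entries total), giving $\mu = 0$ and thus $R = O(1)$-decodability. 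Plugging into Theorem~\ref{thm:circ_to_clique_coded} yields $O(c^2(n^{1/3} + 1)\log n) = O(c^2 n^{1/3}\log n)$ rounds with $O(1)$ quiet rounds and $O(1)$ decodability, which is the claim.

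\textbf{The main obstacle} is verifying the communication-locality bound $\xi = 1/3$ together with the $\mu = 0$ output distribution: both require that the blockwise tiling of $A$, $B$, and $C$ be compatible with a partition of each part into $\Theta(n)$-sized pieces such that (i) the $n^{4/3}$ entries a part reads for its triple $(i,j,k)$ decompose into only $O(n^{1/3})$ whole pieces held (or coded) elsewhere, and (ii) the $n^2$ output entries spread evenly as $O(n)$ per part. This is a bookkeeping argument about how the $n^{2/3}\times n^{2/3}$ submatrices are striped into rows/columns of $n$ entries and which part owns which stripe, and it has to be consistent with the encoding performed during the $O(1)$ quiet rounds (recall the remark in the excerpt that the input encoding affects complexity). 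I would handle it by fixing an explicit assignment: index each $n^{2/3}\times n^{2/3}$ submatrix by its block coordinates, cut it into $n^{1/3}$ row-stripes of $n^{2/3}$ consecutive rows $\times\,n^{2/3}$ columns $= n^{4/3}/n^{1/3}$... wait, $n^{2/3}\cdot n^{2/3}/n^{1/3} = n$ entries per stripe, exactly one piece; there are $n^{1/3}$ such stripes per submatrix, so reading a submatrix means collecting $n^{1/3}$ pieces, confirming $\xi = 1/3$. Everything else is a routine substitution into the already-proven Theorem~\ref{thm:circ_to_clique_coded}.
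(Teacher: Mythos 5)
Your proposal is correct and follows essentially the same route as the paper: the same $n^{1/3}\times n^{1/3}$ outer block decomposition into $n^{2/3}\times n^{2/3}$ submatrices assigned to $n$ triples $(i,j,k)$, the same striping of each submatrix into $n^{1/3}$ pieces of $n$ entries to get $\varphi=\zeta=\xi=1/3$ (hence $\eta=0$) with constant depth and $\mu=0$, followed by substitution into Theorem~\ref{thm:circ_to_clique_coded}. The paper just makes the bookkeeping fully explicit (column-stripes of $A$ paired with row-stripes of $B$ to exploit the outer-product identity, and a fourth communication layer redistributing the partial block-products so each part outputs an $n^{1/3}\times n^{2/3}$ slice of $C$), which is exactly the ``routine'' part you deferred.
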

In particular, Theorem \ref{thm:matrix} means that for any constant $c$, we incur only a logarithmic overhead in the round complexity over the non-faulty \clique model, in which this task can be solved in $O(n^{1/3})$ rounds due to \cite{Censor-HillelKK19}. Following up on our work, \cite{CFGS25} present a different scheme, which yields a complexity of $T^2\cdot n^{o(1)}$ rounds in the \fclique model for any $T$-round \clique algorithm. While for some \clique algorithms their approach improves upon ours, for others our approach is significantly faster: their scheme would give a complexity of $n^{2/3+o(1)}$ for semi-ring matrix multiplication, which is \emph{quadratically} slower compared to ours. We emphasize that matrix multiplication is just an example, and there are more cornerstone tasks that can enjoy our approach by constructing circuits that are efficient for the use of Theorem \ref{thm:circ_to_clique_coded}.

The way that the above is obtained is by compressing the communication rounds into a constant number of layers, while still having small computation and communication locality parameters, that are equal to the parallel partition parameter $n^{\f}$ (that is, $\zeta=\xi=\f$). Thus our algorithm incurs no additional overhead. We say that such a circuit has an $n^{\f}$-\emph{local} parallel partition. That is, since our Theorem \ref{thm:circ_to_clique_coded} shows that we can trade off $\de$ and $\f$ when constructing the circuit, we leverage this while exploiting the concurrency depending on the refined locality parameters.

This exemplifies an immediate corollary of Theorem \ref{thm:circ_to_clique_coded}, which is that circuits with small {local parallel partitions} are the optimal choice for invoking Theorem \ref{thm:circ_to_clique_coded}, in the sense that our transformation incurs the smallest overhead for them when implementing them in the \fclique model. This should be viewed through the lens of what we can say about the complexity of their \clique implementation, which is $O(n^{\de+\f})$ round, as we prove in Lemma~\ref{lem:circ_to_clique} in Section \ref{appendix:non-faulty}.   
Note that by the combination of Lemma \ref{lem:clique_to_circ} and Lemma~\ref{lem:circ_to_clique} it holds that layered circuits are equivalent to \clique algorithms.

~\\\textbf{A remark on quiet rounds.} Matrix multiplication is an excellent example for why we cannot simply replace quiet rounds by an assumption of coded inputs. If we do not carefully shuffle the input matrices before encoding them, then the nodes will be forced to download (decode) too much information compared to what they need, resulting in communication and computation localities that are extremely higher compared to those of the circuit that we construct for proving Theorem \ref{thm:matrix}. More explicitly, if we want a matrix multiplication algorithm to re-distribute the entries of the second input matrix among the nodes such that each node holds a column of the matrix rather than a row, or if we want each node to hold any other pattern of $n$ entries as we do in Section \ref{appendix:semi-ringMM}, 
then we need the information to already be encoded in this manner, as otherwise a node has to decode too much information in order to obtain its relevant entries. This could result in a huge number of rounds for a task that takes $O(1)$ rounds in the non-faulty \clique model.

~\\\textbf{Regimes of lower number of faults.} Finally, we show that if the bound on the number of failures is sublinear, then our method allows circuits with relaxed properties to be implemented by fast \fclique algorithms. The high level intuition is that if we are promised no more than $(c-1)n^{\chi}/c$ faults, we can replace the usage of codewords of length $n$ with $n^{1-\chi}$ disjoint codewords of length $n^{\chi}$. Since these are only technical modifications, we defer them to Section \ref{sec:sublinear}. This allows us to further extend our results for fast (ring) matrix multiplication, for which we show the following (see Section \ref{sec:fastMM}).

\begin{restatable}[A \fclique algorithm for Fast (Ring) Matrix Multiplication ]{theorem}{ThmFastMatrix}
\label{thm:ringmatrix}
Suppose that the inputs on the nodes are matrices 
$A, B \in \Sigma^{n\times n }$,
where each node begins with $n$ coefficients of $A$ and $B$ and $\Sigma $ is a ring, then the value $C = A \cdot B$ can be computed in the ($\chi,\bw$)-\fclique with $O(1)$ quiet rounds, a round complexity of  $O(\max\{c^2 n^{\chi }\log n,c^2 n^{1 - 2/ \omega}\log n \})$, and $O(1)$-decodability. 
\end{restatable}

Here $\omega $ is the exponent of matrix multiplication and $O(n^{1 - 2/ \omega})$ is the complexity of ring matrix multiplication in the non-faulty setting \cite{Censor-HillelGH19}.

\subsection{Technical Overview}
To prove Theorem \ref{thm:circ_to_clique_coded}, our approach is as follows. 
First, we partition the gates of each layer of the circuit into its respective parts in the parallel partition, and assign one node to each part. In the first $O(1)$ quiet rounds, we shuffle the inputs of the nodes such that each node holds the input wires to its gates, using Lenzen's routing scheme \cite{Lenzen13}. Then, each node $w$ encodes its state to prepare for the possibility of a failure, and sends one piece of the codeword to every other node. The code that is used has to be able to tolerate $(\frac{c-1}{c})n$ erasures, as this is the number of nodes that may fail (a node that has already failed simply does not receive its piece from $w$). We use $[n,n/c,(\frac{c-1}{c})n+1]_q$ codes, where $q$ is a field whose elements are of $\Theta(cb\log{n})$ bits, and thus we need $c=O(1)$ additional quiet rounds for this step. 

We then split the layers of the circuit into \emph{epochs}, where each epoch consists of computation layers and ends with a communication layer. On a high level, we say that a layer is a computation layer if for each part in the layer (a part is a set of gates assigned to a node for computing them), all of its output wires go in the next layer to the part that is assigned to the same node. This means that a node $w$ computing its gates in the next layer already has all the information it needs for the computation without needing to communicate with other nodes. A communication layer is any other layer. By the way we split into epochs, we get that once a node $w$ receives the information it needs to compute the gates in its part in the first layer of the epoch, it can compute the gates in its parts in the rest of the layers of the epoch. Eventually, we would want to have a circuit with as few epochs as possible. We emphasize that an algorithm may be represented by many different circuits and, indeed, given an algorithm, one needs to design a circuit that gives the parameters for which our result yields the best complexity measures.

In a non-faulty \clique algorithm, we would simply precede the above with communication between the nodes in order for $w$ to obtain the information it needs for computing the gates in its part in the first layer of the epoch (which is indeed what we do in Lemma \ref{lem:circ_to_clique}). However, in a \fclique algorithm, it is possible for the node $w$ to fail, in which case some node $u$ takes its role in computing the gates of $w$ in each layer in the epoch. Thus, each node $w$ encodes its state at the end of an epoch and sends one piece of the codeword to every other node. In the same spirit, at the beginning of the epoch, a node $u$ that simulates $w$ needs to first collect the pieces of the codeword. Communicating in order to obtain the required information for the epoch takes $O(cn^{\xi})$ rounds, since there are $n^{\xi}$ codes that need to be recovered, due to $n^{\xi}$ being the communication locality of the circuit. Encoding the information at the end of the epoch takes $O(cn^{\zeta})$ rounds, since there are $n^{\zeta}$ codes that need to be used, due to $n^{\zeta}$ being the computation locality of the circuit. Thus, this approach takes $O(c(n^{\zeta}+n^{\xi}))$ rounds per epoch. If every part of the last layer gets successfully encoded, this also implies decodability within a number of rounds that equals its output divided by $n$, as $n$ data elements can be routed to/from each node in a single round (again due to \cite{Lenzen13}).

However, the above is insufficient, because after a node $w$ fails, it may be that another node $u$ which simulates $w$ also fails. If we proceed with simulating $w$ by yet another node $u'$, we may incur a round complexity of the number of failures -- an unacceptable $(\frac{c-1}{c})n$ rounds. 

Thus, we simulate nodes in a more careful manner. For each epoch, we progress by \emph{attempts} for simulating the failed nodes of the epoch. In a certain attempt, if the number of simulation tasks that still need to occur for an epoch is greater or equal to the number of currently non-faulty nodes, then we let each non-faulty node simulate one of them. This promises that the attempt successfully simulates at least $n/c$ failed nodes, as this number of nodes is guaranteed to remain non-faulty. This type of attempt can only occur $O(c)$ times. 

The other case is when the number of remaining failed nodes that need to be simulated drops below the number of currently non-faulty nodes. In this case we simulate every such node by a multiplicity of non-faulty nodes, which implies that although we cannot argue that $n/c$ such tasks succeed, we have that it is still hard for the adversary to prevent such a task from succeeding because it would have to fail all the nodes that are assigned to it. We show that we make progress by succeeding in at least a constant fraction of the remaining tasks, resulting in at most a logarithmic number of such attempts. To make our analysis go through, we actually need to \emph{batch} these remaining tasks into batches of size $3c$, each of which is handled by some carefully chosen multiplicity of non-faulty nodes. 

Thus, one of the $c$ factors in our complexity is due to coding, and the other is due to either batching into $O(c)$-sized batches or requiring $c$ attempts per epoch (these latter two are disjoint events).
Finally, we incur an additive $O(c^2 n^{\mu} \log n )$ for checkpointing the outputs. Note that if $\mu$ is larger than the computation locality of the last epoch, then we can replace the computation of the last epoch and instead consider it as part of the decodability complexity: simply decode the checkpointed values from the penultimate epoch and compute the last epoch (that is, simulate the last computation epoch as part of the decodability).

We mention that, informally speaking, coding is essential for \fclique algorithms. We show in Lemma \ref{lem:must_code} that if the messages of the quiet rounds depend only on the input of each node and not on messages it receives, then the messages of $c$ quiet rounds must form codewords of an $[n,n/c,(\frac{c-1}{c})n+1]_q$ error-correcting code.

\subsection{Additional Related Work}
\label{appendix:relaredWork}
Already in the early 1980's, Pease et al. and Lamport et al.\cite{PeaseSL80, LamportSP82} introduced Byzantine Agreement, which requires $n$ parties to agree on one of their inputs, despite malicious behavior of some of the parties. Coping with Byzantine failures in our setting is an intriguing question. It is important to note that the faults we consider are fail-stop faults and not Byzantine faults. For the latter, there are provable limits of $1/3$ to the fraction of Byzantine parties \cite{PeaseSL80,FischerLM86}. In contrast, we can tolerate arbitrarily large amounts of crash failures.
We further mention that consensus algorithms for crash failures have been widely studied (see, e.g., \cite{AttiyaWelch}), but the known algorithms do not solve the general problem we consider here of computing \emph{arbitrary} circuits with \emph{small messages}.

Much earlier, von Neumann's \cite{von1956probabilistic} studied faulty gates in circuits. This differs from our setting of faulty nodes in the \fclique model, although we use circuits as a tool to describe their computations.
Also, \cite{Spielman96} started a line of work about the coded computing model for fault tolerant computations by multi-processors, in which the input and outputs of the system are coded. Our work for the \fclique model uses a similar approach of coding and decoding, but also takes into account the complexity of coding and decoding and aims to minimize them.

It is important to mention that error-correcting codes have been used in various types of distributed algorithms, this dates back to information dispersal of \cite{Rabin89} and goes through more recent results such as distributed zero knowledge proofs \cite{BickKO22}.

The literature about various types of fault tolerant distributed computations in various settings is too vast to survey here. Just to give a flavor, we mention a couple of lines of work about distributed graph algorithms prone to failures: In the \congest model, in which the communication network is an arbitrary graph and messages are of small $O(\log{n})$-bit size, various adversaries have been studied, which can cause edge or nodes failures, as well as Byzantine faults \cite{FischerP23,HitronPY22, HitronP21,HitronP21a}. In networks with unbounded amount of noise, \cite{Censor-HillelGH19, CensorHillelCGS23, FreiGGN24} show algorithms that can succeed in various computations despite having to cope with basically all message content being lost, or with asynchrony. In \cite{KumarMS22}, a slightly different faulty \cliquefull setting is considered for the specific task of graph realizations.

\section{Preliminaries}
\label{sec:preliminaries}
\subsection{The Model}
\begin{definition}[\textbf{The \clique model}]
\label{def:clique}
The \clique model is a synchronous communication model on $n$ nodes, where in each round every pair of nodes can exchange $b\log{n}$-bit messages, for some constant $b$. In an algorithm $\mathcal{A}$, each node has an input in $\Sigma^n$ for an alphabet $\Sigma$ of size $2^{b\log{n}}$, and after executing $\mathcal{A}$ each node holds an output in $\Sigma^M$ for some value of $M$.
\end{definition}

\begin{definition}[\textbf{The \fclique model}]
\label{def:faulty-clique}
The $\bw$-\fclique model is similar to the \clique model, except that an adversary may fail up to a fraction of $(\bw-1)/\bw$ of the nodes throughout the execution of an algorithm (that is, $n/c$ nodes must be non-faulty). A node that is failed in a certain round, does not send or receive any message from that round on. The \fclique model allows an initial number of \emph{quiet rounds}, in which no faults can occur. 
\end{definition}

\begin{definition}[\textbf{The complexity of a \fclique algorithm}]
\label{def:fclique-algorithm}
An algorithm $\mathcal{A}$ for the \fclique model has three complexity parameters. 

The first complexity measure is the number of quiet rounds it requires. The other two complexity measures are the round complexity and decodability complexity, defined as follows.

Let $\mathcal{A}$ be an algorithm for the \fclique model, and consider a non-faulty execution $\mathcal{E}_\textrm{non-faulty}$ of $\mathcal{A}$. Denote by $\sigma_{w,T_\textrm{non-faulty}}$ the state that node $w$ has at the end of round $T_\textrm{non-faulty}$ in $\mathcal{E}_\textrm{non-faulty}$.

For a round $T_\textrm{non-faulty}$ of $\mathcal{E}_\textrm{non-faulty}$, we say that a possibly faulty execution $\mathcal{E}$ of $\mathcal{A}$ satisfies at 
round $T$ the \emph{$R$-decodability condition} if the following holds. Suppose every node $u$ is associated with some other node $w_u$, then it is possible in $R$ rounds for each node $u$ that is non-faulty at the end of these $R$ rounds to obtain $\sigma_{w_u,T_\textrm{non-faulty}}$.
(Informally, the only way the adversary can prevent $u$ from receiving knowledge of \emph{any} other node's state after the next $R$ rounds is by failing $u$ itself.)

We stress that during these additional rounds the adversary may continue to fail nodes up to its given budget.
We say that the execution $\mathcal{E}$ is finished executing $\mathcal{A}$ in $T$ rounds with $R$-decodability, if at the end of $T$ rounds, it satisfies the $R$-decodability condition for the last round $T_\textrm{non-faulty}$ of $\mathcal{E}_\textrm{non-faulty}$ (note that it is possible that for other rounds of $\mathcal{E}_\textrm{non-faulty}$, the decodability condition in $\mathcal{E}$ holds with larger values than $R$).

The maximum values of $T$ and $R$ over all executions are the \emph{round complexity} and the \emph{decodability complexity} of $\mathcal{A}$, respectively.
\end{definition}

\subsection{Layered Circuits}
We define layer circuits and pinpoint which of their parameters captures the complexities of \clique and \fclique algorithms that compute them.

\begin{definition}[\textbf{fan-in, fan-out}]
\label{def:in}
For a vertex $v$ in a directed graph $G$, let 
$\mathrm{in}(v):=\{w \in V \mid (w,v) \in E \}$ 
and $\mathrm{out}(v):=\{w \in V \mid (v,w) \in E \}$.
The values $|\mathrm{in}(v)|$ and $|\mathrm{out}(v)|$ are called the \emph{fan-in} and \emph{fan-out} of $v$, respectively.
\end{definition}

\begin{definition}[\textbf{layered circuit}]\label{def:lay_circ}
A {layered circuit} $\mathcal{C}_n = \mathcal{C} = (V,E)$ of depth $n^\de$ over an alphabet $\Sigma$ is a connected directed graph with vertex set $V = V_0 \cup V_1 \cup ... \cup V_{n^\de}$ and edge set $E \subset \cup _{i \in [n^\de]} V_i \times V_{i+1}$, where each vertex (gate) $w$ with fan-in $F$ is labeled by a function $f_w:\Sigma ^{F} \rightarrow \Sigma $.

We say that a layered circuit $\mathcal{C}$ with $V_0=\{v^{(0)}_i\}_{i\in [k_{in}]}$ and $V_{n^\de}=\{v^{(n^\de)}_i\}_{i\in [k_{out}]}$ computes a function $f:\Sigma^{k_{in}}\rightarrow\Sigma^{k_{out}}$ if the following recursively defined function $f^\mathcal{C}$ is equal to $f$.
The function $f^\mathcal{C}$ is defined as the function whose output is $(f^\mathcal{C}_{v^{(n^\de)}_1},...,f^\mathcal{C}_{v^{(n^\de)}_{k_{out}}})$  where 
 \begin{enumerate}
     \item For every $1\leq i\leq k_{in}$, 
     $f_{v_i^{(0)}}:\Sigma \rightarrow\Sigma $ is the identity function and $f^\mathcal{C}_{v_i^{(0)}} = f_{v_i^{(0)}}(x_i) = x_i$, 
     where $x_i$ is the $i$-th input of $f$.

     \item  For every $1\leq\ell\leq n^\de$ and every $i$ such that $v_{i}^{(\ell)} \in V_\ell$ with fan-in ${F_{\ell,i}}$, there are gates $v^{\ell-1}_{i,j}$ for $j\in [{F_{\ell,i}}]$, such that     
     $f^\mathcal{C}_{v^{(\ell)}_i} = f_{v^{(\ell)}_i}(f^\mathcal{C}_{v^{(\ell-1)}_{i,1}},...,f^\mathcal{C}_{v^{(\ell-1)}_{i,{F_{\ell,i}}}})$.
 \end{enumerate}
\end{definition}

We make use of the following parameter of layered circuits, which we call the \emph{parallel partition parameter}, which captures the amount of ``communication'' between nodes for computing a layer of the circuit in a  \clique algorithm. 

\begin{definition}[\textbf{parallel partition}]
\label{def:par_part_graph}
   Let ${G}_n={G}=(V,E)$ be a 
   bipartite directed graph
   with $V = L \cup R$ and $E \subset L \times R$. For a pair of partitions $L = L_0 \cup L_1 \cup ... \cup L_{n-1}$ and $R = R_0 \cup R_1 \cup ... \cup R_{n-1}$, let 
   \begin{equation*}
       E_{i}^L := \{(\ell,r) \mid \ell\in L_i \text{ and } \exists j\neq i, r \in R_j\}, \quad
       E_{j}^R := \{(\ell,r) \mid r \in R_j \text{ and } \exists i\neq j, \ell \in L_i\}.
   \end{equation*}
   We say that ${G}$ has a {parallel partition} of size $n$ with {block fan size} $n^{1+\f}$ if there exists such a pair of partitions in which for every $i,j\in [n]$, $|E_i^{L}|,  |E_j^{R}| ={O(n^{1+\f})}$.
       We say that a layered circuit $\mathcal{C} = (V,E)$ of depth $n^\de$  has an $n^{\f}$-{parallel partition}, if there is a refinement $\mathcal{P}$ of the partition of $V = V_0 \cup V_1 \cup ... \cup V_{n^\de}$
     such that for every $i$, the restriction of 
    $ { \mathcal{P}}$ to each of the subgraphs $V_i \cup V_{i+1}$ is
    a parallel partition of size $n$ with block fan size {$n^{1+\f}$} with $L= V_i$, $R= V_{i+1}$. 
   \end{definition}

\sloppy The following parameters of layered circuits, which we call \emph{communication locality} and \emph{computation locality}, help us capture the amount of ``locality'' between layers when we later interpret them as \fclique algorithms.

\begin{definition}[\textbf{Computation locality and  Communication locality}]\label{def:locality}
Let $\mathcal{C} = (V,E) = (\cup_iV_i,E)$ be a layered circuit of depth $n^{\de}$ and an $n^\f$-{parallel partition} with respect to a refinement $\mathcal{P}$ of $V$. Let $P_{i,j}$ denote the $j$-th part of $V_i$ in $\mathcal{P}$.

We say that $\mathcal{C}$ has computation locality $n^\zeta$ and communication locality $n^{\xi}$ 
if there is a constant $h$ such that for all $P_{i,w}$ 
there exists a further (not necessarily disjoint) subdivision 
$P_{i,w} =: \bigcup_{j \in [hn^{\zeta}]}P^{(j)}_{i,w}$
such that 
$|P^{(j)}_{i,w}| = n$.
For each $u$, 
the number of pairs $j,w$ such that $w\neq u$ for which $P^{(j)}_{i,w}$ has a wire into $P_{i+1,u}$ is at most 
$h'n^\xi$ for some constant $h'$. We denote by $\bin(P_{i+1,u})$ the parts corresponding to those pairs, that is, the parts $P_{i,w}^{(j)}$ (where $w\neq u$)
in $V_{i}$ 
that consist of at least one gate that has a wire into a gate in the part $P_{i+1,u}$.

\end{definition}
\begin{definition}[\textbf{Local parallel partition}] \label{def:lpp}
    We say that a layered circuit $\mathcal{C} = (V,E) = (\cup_iV_i,E)$ of depth $n^{\de}$ has an $n^\f$-{local parallel partition}, 
    if it has an $n^\f$-parallel partition 
    for a refinement $V = \cup_{P \in \mathcal{P}}P$, with a computation locality of $O(n^{\f})$ and a communication locality of $O(n^\f)$.
\end{definition}

\section{\clique Algorithms as Layered Circuits}
\label{appendix:clique2circuit}
We prove that every \clique algorithm can be expressed by a layered circuit. It may be helpful for the reader to go through the construction of the circuit, to get a feeling of how a layered circuit may look like. However, a more informed reader may skip the proof and move on directly to our main result in Section \ref{sec:circuitTOfclique}.
\begin{lemma}\label{lem:clique_to_circ}
    Let $\mathcal{A}$ be a \clique algorithm that computes a function 
    $f$ in $T$ rounds. There is a circuit $\mathcal{C}$ of depth { $2T+1$} with alphabet size $|\Sigma| = b\log(n)$ that has a $1$-parallel partition (that is, $n^{\f}$ with $\f=0$), for which $f^\mathcal{C}=f$. 
\end{lemma}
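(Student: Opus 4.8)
The plan is to construct the circuit $\mathcal{C}$ directly from the round-by-round execution of $\mathcal{A}$, using two circuit layers per communication round: one layer that captures the local computation each node performs (deciding what to send), and one layer that captures the routing of messages to their destinations. Concretely, I would index the gates of layer $2t$ and $2t+1$ by pairs $(w, k)$ where $w \in [n]$ is a node and $k \in [n]$ indexes one of the $n$ alphabet symbols that node $w$ holds (its state in $\Sigma^n$). The input layer $V_0$ has a gate for each of the $n$ inputs of each of the $n$ nodes, so $k_{in} = n^2$; the identity gates of item (1) of Definition~\ref{def:lay_circ} just pass along the input. The key point is that a \clique round is entirely described by: (i) each node $w$ applying some function to its current state $\sigma_w \in \Sigma^n$ to produce, for every other node $v$, a single symbol $m_{w \to v} \in \Sigma$ to send, plus its new ``kept'' state; and (ii) each node $v$ collecting the symbols $\{m_{w\to v}\}_{w}$ together with what it kept, forming its new state in $\Sigma^n$.

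First I would make the ``computation'' layer $V_{2t}$: for fixed $w$, all $n$ gates $\{(w,k)\}_k$ have fan-in $n$ with incoming wires from $\{(w,k')\}_{k'}$ in $V_{2t-1}$ — i.e., node $w$'s gates depend only on node $w$'s previous gates — and are labeled by the appropriate coordinate functions of $\mathcal{A}$'s local transition at round $t$ (padding so that each node's block has exactly $n$ gates holding the $n$ outgoing messages / retained symbols). Then the ``routing'' layer $V_{2t+1}$: the gate $(v,k)$ has fan-in $1$, wired from the single gate in $V_{2t}$ holding the symbol destined for $v$ (either $m_{w\to v}$ for the appropriate $w$, or $v$'s retained symbol), labeled by the identity. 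After $T$ rounds, layer $V_{2T}$ plus a final routing/identity layer $V_{2T+1}$ holds each node's output; reading off the designated output coordinates gives $f^{\mathcal{C}} = f$, which I would verify by a straightforward induction on $t$ matching $f^{\mathcal{C}}_{(w,k)}$ at layer $2t$ (resp. $2t+1$) to node $w$'s state after $t$ rounds of $\mathcal{A}$. This gives depth $2T+1$ and alphabet $\Sigma$ as claimed.

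For the $1$-parallel partition, the natural refinement $\mathcal{P}$ is the obvious one: the part of $V_i$ assigned to ``node'' $w$ is exactly $\{(w,k)\}_{k\in[n]}$, so $|L_w| = |R_w| = n$ for every block in every bipartite slice $V_i \cup V_{i+1}$. I then check the block fan size bound $|E_w^L|, |E_w^R| = O(n^{1+\f})$ with $\f = 0$, i.e.\ $O(n)$: in a computation layer $V_{2t} \cup V_{2t+1}$, all wires leaving block $w$ of $V_{2t}$ go to at most $n$ distinct destination gates in $V_{2t+1}$ (one symbol per recipient node, i.e.\ one gate per block), so $|E_w^L| = O(n)$; symmetrically, block $v$ of $V_{2t+1}$ receives at most $n$ wires from ``foreign'' blocks (at most one symbol from each other node), so $|E_v^R| = O(n)$. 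In a trivial (identity) layer $V_{2t-1}\cup V_{2t}$ there are no cross-block wires at all. Thus every bipartite restriction is a parallel partition of size $n$ with block fan size $n$, as required.

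The main obstacle I anticipate is purely bookkeeping rather than conceptual: making the gate-indexing scheme and the fan-in/fan-out padding precise enough that every block has exactly $n$ gates of size-$|\Sigma|$ content at every layer (so that Definition~\ref{def:lay_circ} and Definition~\ref{def:par_part_graph} are literally satisfied), and handling the boundary cases — the input layer where $k_{in}=n^2$, and the final layer — without off-by-one errors in the depth count $2T+1$. One should also be slightly careful that $\mathcal{A}$'s per-round message to $v$ may be empty or shorter than $\log n$ bits; this is absorbed by allowing the symbol to be a designated ``blank'' element of $\Sigma$, which does not affect the partition parameters. None of this is deep, but it is where the proof will spend most of its words.
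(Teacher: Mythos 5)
Your overall architecture matches the paper's: two circuit layers per \clique round, with blocks of gates indexed by nodes, odd/even layers alternating between local computation of outgoing messages and fan-in-$1$ routing of messages across blocks, and the cross-block wire count per block being exactly $n$ (one message per other node), which gives block fan size $n^{1+\f}$ with $\f=0$. The routing/partition analysis is correct.

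However, there is a genuine gap in how you carry state between layers. You fix every block to exactly $n$ gates at every layer and assert that a node's state after each round lives in $\Sigma^n$, with the $n$ gates of a computation layer holding ``the $n$ outgoing messages / retained symbols.'' This throttles the information node $w$ can carry from layer to layer to $n\cdot b\log n$ bits. A general $T$-round \clique algorithm accumulates up to $(t+1)n$ symbols after $t$ rounds (its input plus $n$ received symbols per round), and its later messages and its output may depend on all of them; since wires in a layered circuit only connect consecutive layers, information not represented in layer $V_{2t}$ is irretrievably lost, so your circuit cannot compute $f$ for such algorithms (e.g., any algorithm whose output depends on all $n^2$ inputs learned over $n$ rounds). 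The paper resolves this by adding \emph{storage gates}: the part $P_{2\ell',i}$ has $(\ell'+1)n$ gates, of which $n$ are inbox gates and the rest are fan-in-$1$ copies of everything node $i$ has seen so far, and the next computation layer's outbox gates take fan-in from all of them. Note that Definition~\ref{def:par_part_graph} only requires the partition to have $n$ parts with $O(n^{1+\f})$ cross-part wires each; it does not bound part sizes, so the growing blocks are legal and all added wires are intra-block, preserving $\f=0$. Your proof goes through once you drop the ``exactly $n$ gates per block'' constraint and add these storage gates.
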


\begin{proof}
~\textbf{Circuit construction:}
We construct the circuit $\mathcal{C}$ by first constructing the base layers $V_{0},V_{1},V_{2},V_{3}$, { corresponding to the first round}.
These are the cases $\ell = 0,1,2,3$; we then later consider the general cases $\ell = 2k, 2k+1$. See Figure \ref{fig:clique2circuit} for an illustration.

 ($\bm{\ell = 0}$)    We first construct $V_{0}:= \{v^{(0)}_{i,j}\}_{\substack{i\in [n],j\in [n]}}$ so that it represents the initial inputs of the nodes. Formally, for every $i\in [n]$ we denote $P_{0,i}:=\{v^{(0)}_{i,j}\}_{j\in [n]}$ as the set of gates that hold the $n$ inputs for node $i$, which we call the $n$ \emph{input gates} in $P_{0,i}$. Each such gate has fan-in 1 and fan-out $n+1$, which delivers its input wire as is to $n+1$ gates of $P_{1,i}$ in the next layer. 
    
  ($\bm{\ell = 1}$)  The layer $V_{1}:= \{v^{(1)}_{i,j}\}_{\substack{i\in [n],j\in [2n]}}$ represents the messages sent in the first round of $\mathcal{A}$. Each node $i\in [n]$ has \emph{outbox message gates}, defined as $\{v^{(1)}_{i,j}\}_{\substack{i\in [n],j\in [n]}}$, each has fan-in $n$, where for every $j\in [n]$, the gate $v^{(1)}_{i,j}$ receives as input the outputs of the all gates $v^{(0)}_{i,j}$ (for $j\in [n]$).
    The output of $v^{(1)}_{i,j}$, for $j \in [n]$, is the message that node $i$ sends to node $j$ on the first round.
    Each node also has its \emph{storage gates}, $\{v^{(1)}_{i,j}\}_{\substack{i\in [n],j\in [n,2n]}}$, each with fan-in of one wire, $e_{i,n+j}^{(0)} :=  v^{(0)}_{i,j} \rightarrow v^{(1)}_{i,n+j}$, for $j \in [n]$, 
    which stores all of its inputs.
    We denote 
    $P_{1,i}:=\{v^{(1)}_{i,j}\}_{j\in [2n]}$.

    ($\bm{\ell = 2}$)
    We construct the layer
    $V_{2}:= \{v^{(2)}_{i,j}\}_{\substack{i\in [n],j\in [2n]}}$ so that it represents
    the information needed to compute  either the messages that the nodes send in the second round or the outputs of $\mathcal{A}$, corresponding to whether $\mathcal{A}$ has $T>1$ or $T=1$ rounds, respectively. 
    Each node has {\emph{inbox message gates}} $v^{(2)}_{i,j}$, for $j \in [n]$, which represents the message that node $i$ receives from node $j$ on the first round and thus we have wires $e_{j,i}^{(1)} :=  v^{(1)}_{j,i} \rightarrow v^{(2)}_{i,j}$, for all $i,j \in [n]$.
Each node also has its storage gates, $\{v^{(1)}_{i,j}\}_{\substack{i\in [n],j\in [n,2n]}}$, each with fan-in of one wire, 
$e_{i,n+j}^{(1)} :=  v^{(1)}_{i,j} \rightarrow v^{(2)}_{i,n+j}$, which store all of its inputs. 
    We denote 
    $P_{2,i}:=\{v^{(2)}_{i,j}\}_{j\in [2n]}$.

($\bm{\ell = 3}$)
    We construct the layer
    $V_{3}:= \{v^{(3)}_{i,j}\}_{\substack{i\in [n],j\in [3n]}}$ so that it represents
    either the messages that the nodes send in the second round or the outputs of $\mathcal{A}$, corresponding to whether $\mathcal{A}$ has $T>1$ or $T=1$ rounds, respectively. 
    For $j \in [n]$, the inputs to $v^{(3)}_{i,j}$ are the wires $e_{i,j,k}^{(2)} :=  v^{(2)}_{i,k} \rightarrow v^{(3)}_{i,j}$, for all $k \in [2n]$; in particular, we let every such gate be connected to all of the  storage gates of that node in the previous layer. 
    If $T>1$, we define the output of $v^{(3)}_{i,j}$, for $j \in [n]$, to be the message that node $i$ sends to node $j$ on the second round.   
    If $T=1$, we define the output of $v^{(3)}_{i,j}$, for $j \in [Output]$, to be the output of $\mathcal{A}$, where the value of $Output$ is the size of output of each node. 
Each node also has its storage gates, $\{v^{(3)}_{i,j}\}_{\substack{i\in [n],j\in [n,3n]}}$, which each have fan-in of one wire, 
$e_{i,n+j}^{(2)} :=  v^{(2)}_{i,j} \rightarrow v^{(3)}_{i,n+j}$, which stores all of the values from the previous layer.
    We denote 
    $P_{3,i}:=\{v^{(3)}_{i,j}\}_{j\in [3n]}$.

($\bm{\ell = 2\ell'}$)
In general,  
for $\ell = 2\ell' $, we construct $       V_{\ell}:= \{v^{(\ell)}_{i,j}\}_{\substack{i\in [n],j\in [(\ell'+1)n]}}$ as follows. 
An inbox message gate $v^{(\ell)}_{i,j}$ for $j \in [n]$ has an incoming wire $e^{(\ell-1)}_{j,i}:=v^{(\ell-1)}_{j,i}\rightarrow v^{(\ell)}_{i,j}$. 
The storage gates $v^{(\ell)}_{i,j}$ for $j \in [n,(\ell'+1)n]$ store all of the data received by node $i$ up to round $\ell$. That is, each such gate has fan-in of one wire, 
$e_{i,n+j}^{(\ell-1)} :=  v^{(\ell-1)}_{i,j} \rightarrow v^{(\ell)}_{i,n+j}$. 
    We denote $P_{\ell,i}=\{v^{(\ell)}_{i,j}\}_{j\in [(\ell'+1)n]}$.

($\bm{\ell = 2\ell'+1}$)
If $\ell = 2\ell' +1 $ 
and $\ell < 2T+1$, then we construct $       V_{\ell}:= \{v^{(\ell)}_{i,j}\}_{\substack{i\in [n],j\in [(\ell'+2)n]}}$ such that an outbox message gate $v^{(\ell)}_{i,j}$ for $j \in [n]$ has an outgoing wire to $v^{(\ell+1)}_{j,i}$ 
and an incoming wire from 
all of the storage
gates from the previous round, i.e., $e_{i,j,k}^{(\ell-1)} :=  v^{(\ell-1)}_{i,k} \rightarrow v^{(\ell)}_{i,j}$, for all $k \in [(\ell'+1)n]$. We further have storage gates $v^{(\ell)}_{i,j}$ for $j \in [n,(\ell'+2)n]$ that store the all of the data received up to round $\ell$, i.e., each such gate has fan-in 1 with the wire $e_{i,n+j}^{(\ell-1)} :=  v^{(\ell-1)}_{i,j} \rightarrow v^{(\ell)}_{i,n+j}$.
Otherwise, if instead $\ell = 2T+1$, we define the output of $v^{(\ell)}_{i,j}$, for $j \in [Output]$, to be the output of $\mathcal{A}$, where the value of $Output$ is the size of output of each node. 
    We denote $P_{\ell,i}=\{v^{(\ell)}_{i,j}\}_{j\in [(\ell'+2)n]}$.

\begin{figure}[htbp]
            \centering
            \includegraphics[trim={1cm 5.5cm 2cm 9cm},clip,width=1\linewidth]{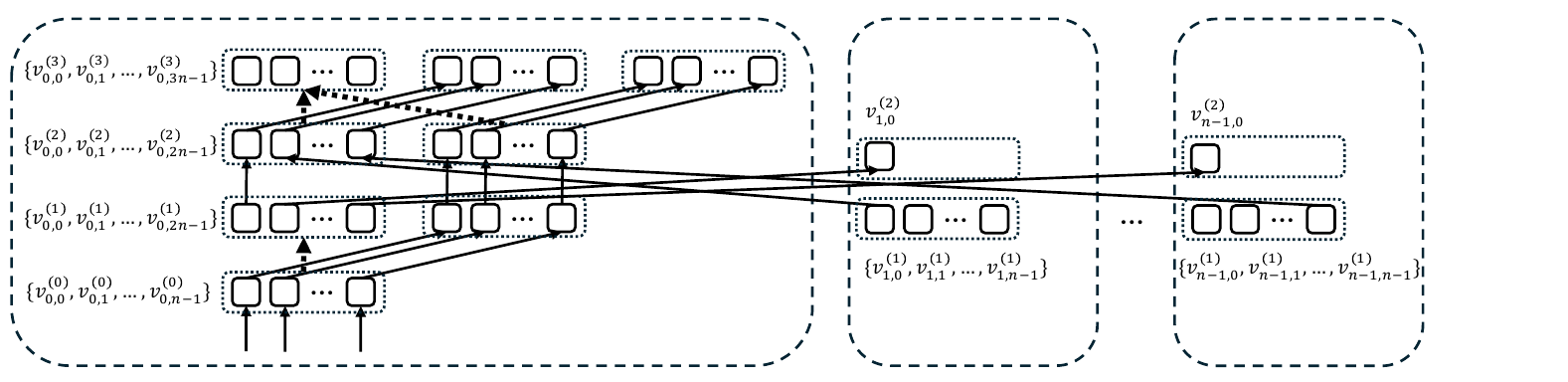}
            \caption{Illustration for layers $\ell=0,\dots,3$. The gates of node $0$ appear on the leftmost box, while for nodes $1$ and $n-1$ only several of their gates are depicted in the other two boxes. The solid thin arrows represent wires (some wires are omitted for clarity). A dashed wide arrow between two blocks of gates means that every gate in the block at the lower level has a wire into every gate in the block of the upper level.}
            \label{fig:clique2circuit}
        \end{figure}
    
~\\\textbf{Circuit parameters:} By construction, the depth of the circuit is thus {$2T+1$}. In addition, every consecutive pair of gate layers has a parallel partition with block fan size of 1. To see this, consider $V_{\ell},V_{\ell+1}$. If $\ell=2\ell'+1$ then for every $i\in [n]$, the only wires into gates of $P_{\ell+1,i}$ are from gates in $P_{\ell,i}$ and thus the block fan size is 0. Otherwise, if $\ell=2\ell'$ then for every $i\in [n]$, the only wires into $P_{\ell+1,i}$ which are from gates in $P_{\ell,i'}$ for $i'\neq i$ are the wires $e^{(\ell-1)}_{j,i}:=v^{(\ell-1)}_{j,i}\rightarrow v^{(\ell)}_{i,j}$ for $j \in [n]$. Thus, each block has one incoming wire from every other block, so the block fan size is $n$ (i.e., $\f=0$).

~\\\textbf{Correctness:}   
It remains to prove that $f_{\mathcal{C}}=f$. We prove the following by induction on the number of layers: If $\ell=2\ell'$ then for every $i,j\in [(\ell'+1)n]$, the outputs of the gates $v^{(2\ell')}_{i,j}$ are the inputs of node $i$ and all of the messages it receives in rounds 1 to $\ell'$ in $\mathcal{A}$. If $\ell=2\ell'+1$ then for every $i,j\in [n]$, the outputs of the gates $v^{(2\ell'+1)}_{i,j}$ are either the message that node $i$ sends in round $\ell'+1$ of $\mathcal{A}$ to node $j$ if $\ell' < T$ or the $j$-th output of node $i$ if $\ell'=T$, and the gates $v^{(2\ell'+1)}_{i,j}$ for $j\in [n,(\ell'+2)n]$ hold the inputs of node $i$ and all of the messages it receives in rounds 1 to $\ell'$ in $\mathcal{A}$.

\textbf{Base Step ($\ell=0,\dots,3$):} 
By construction, for $j\in [n]$, the gates $v^{(0)}_{i,j}$ receive the inputs that node $i$ has for $f$, and these are copied to their output wires as is. Each gate $v^{(1)}_{i,n+j}$ for $j\in [n]$ receives a single wire from $v^{(0)}_{i,j}$ and outputs it, and each gate $v^{(1)}_{i,n}$ for $j\in [n]$ receives all input wires and computes the messages that node $i$ sends in the first round of $\mathcal{A}$. Each gate $v^{(2)}_{i,j}$ for $j\in [n]$ receives one wire, which is from $v^{(1)}_{j,i}$ and thus its content is the message that node $i$ receives from node $j$ in the first round of $\mathcal{A}$. Each gate $v^{(2)}_{i,n+j}$ for $j\in [n]$ receives one wire which is from $v^{(1)}_{i,j}$, and thus holds the $j$-th input of node $i$. Finally, each of the gates $v^{(3)}_{i,j}$ for $j\in [n]$ receives as input a wire from all gates $v^{(2)}_{i,k}$ for $k\in [2n]$ and thus it correctly computes the outputs of node $i$ if $T=1$ or its outgoing message to node $j$ if $T>1$, and the gates $v^{(3)}_{i,n+j}$ for $j\in [2n]$ become copies of the gates $v^{(2)}_{i,j}$ for $j\in [2n]$.

\textbf{Induction Step:}
We assume the induction hypothesis holds up to layer $2\ell'-1$ and prove it for layers $2\ell'$ and $2\ell'+1$. 

By the induction hypothesis, the output wires of $v^{(2\ell'-1)}_{i,j}$ for $i,j\in [n]$ holds the message that node $i$ sends to node $j$ in round $\ell'$ of $\mathcal{A}$ and thus, by construction, the input to $v^{(2\ell')}_{i,j}$ for $i,j\in [n]$ is the message that node $i$ receives from node $j$ in that round. Further, by the induction hypothesis, the output wires of $v^{(2\ell'-1)}_{i,j}$ for $i,j\in [n,(\ell'+1)n]$ hold the inputs to node $i$ and the messages it received up to round $\ell'-1$. Thus, by construction, these are the input and output wires of the gates $v^{(2\ell')}_{i,j}$ for $i,j\in [n,(\ell'+1)n]$. This proves the induction step for layer $2\ell'$. 

For layer $2\ell'+1$, by construction, each gate $v^{(2\ell'+1)}_{i,j}$ for $i,j\in [n]$ receives as input wires all outputs of gates $v^{(2\ell')}_{i,k}$, for $k\in [(\ell'+1)n]$. Since these are all inputs and messages that node $i$ receives, this gate correctly computes the message sent by node $i$ to node $j$ in round $\ell'+1$ of $\mathcal{A}$. Finally, by construction, gate $v^{(2\ell'+1)}_{i,n+j}$ for $j\in [n,(\ell'+1)n]$ gets and outputs the wire from gate $v^{(2\ell')}_{i,j}$ and thus corresponds to the input of node $i$ and all messages it received in rounds 1 to $\ell'$, as needed.
\end{proof}

\section{Layered Circuits as $\bw$-\fclique Algorithms}
\label{sec:circuitTOfclique}

We now prove our main result about how to compute a layered circuit in the $\bw$-\fclique model.

\ThmCiruitToFclique*

An immediate corollary of Theorem \ref{thm:circ_to_clique_coded} is that circuits with small \emph{local parallel partitions} are optimal if the output size is not huge, in the sense that our transformation incurs the smallest overhead for them when implementing them in the \fclique model, compared to what we can say about their \clique implementation, which is $O(n^{\de+\f})$ rounds (see Lemma~\ref{lem:circ_to_clique}). 

\begin{corollary}
\label{corollary:localIsOptimal}
    Let $\mathcal{C}$ be a circuit of depth $n^\de$ with alphabet size $|\Sigma| = 
    b \log(n)$
    that has an $n^{\f}$-local parallel partition (i.e., with computation locality $n^{\f}$ and communication locality $n^{\f}$). Let $\mu$ be such that $O(n^{1+\mu})$ bounds the max size of the output per part. 
    For every constant $\bw$, there is an algorithm $\mathcal{A}$ in the $\bw$-\fclique 
    that computes the function $f_\mathcal{C}$ with $O(1)$  quiet rounds, a (non-quiet) round complexity of $O(\bw^2 (n^{\de+\f}+n^{\mu})\log{n})$, and decodability complexity of $O(n^{\mu})$. 
\end{corollary}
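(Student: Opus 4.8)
The plan is to obtain Corollary~\ref{corollary:localIsOptimal} as a direct instantiation of Theorem~\ref{thm:circ_to_clique_coded}, with essentially no work beyond unwinding the definition of a local parallel partition. By Definition~\ref{def:lpp}, a circuit $\mathcal{C}$ with an $n^{\f}$-local parallel partition is in particular a circuit of depth $n^{\de}$ with an $n^{\f}$-parallel partition whose computation locality is $n^{\zeta}$ with $\zeta=\f$ and whose communication locality is $n^{\xi}$ with $\xi=\f$ (absorbing the constants $h,h'$ of Definition~\ref{def:locality} into the $O(\cdot)$ notation). Hence the parameter $\eta=\max\{\zeta-\f,\ \xi-\f\}$ that governs the round complexity in Theorem~\ref{thm:circ_to_clique_coded} evaluates to $\eta=\max\{0,0\}=0$.

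With this substitution, Theorem~\ref{thm:circ_to_clique_coded} directly yields a $\bw$-\fclique algorithm $\mathcal{A}$ computing $f_\mathcal{C}$ with $O(1)$ quiet rounds, decodability complexity $R=O(n^{\mu})$, and round complexity $O(\bw^2(n^{\de+\f+\eta}+n^{\mu})\log n)=O(\bw^2(n^{\de+\f}+n^{\mu})\log n)$, which is exactly the claimed bound. The hypothesis that $O(n^{1+\mu})$ bounds the max output size per part is passed through verbatim to Theorem~\ref{thm:circ_to_clique_coded}, and the constant $\bw$ assumption is the same in both statements.

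Since the statement is a pure specialization of Theorem~\ref{thm:circ_to_clique_coded}, there is no genuine obstacle to overcome: all the technical content — coding each node's state with an $[n,n/c,(\tfrac{c-1}{c})n+1]_q$ code during the quiet rounds, splitting the circuit into epochs, the attempt/batching argument for re-simulating failed nodes, and the output checkpointing — resides in the proof of that theorem. The only thing worth emphasizing in the write-up is the interpretation behind the corollary's name: it shows that the faulty-model overhead for a local parallel partition is merely a $\bw^2\log n$ multiplicative factor together with an additive $O(n^{\mu}\log n)$ for checkpointing the outputs, on top of the $O(n^{\de+\f})$-round cost of implementing the very same circuit in the non-faulty \clique model (Lemma~\ref{lem:circ_to_clique}); this is precisely the sense in which such circuits are the optimal target for invoking Theorem~\ref{thm:circ_to_clique_coded}, since the only way to reduce the exponent below $\de+\f$ would be to simultaneously reduce $\de$, $\f$, $\zeta$, and $\xi$, i.e., to supply a structurally different circuit for the same function.
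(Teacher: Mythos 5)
Your proposal is correct and matches the paper exactly: the paper treats this as an immediate specialization of Theorem~\ref{thm:circ_to_clique_coded}, obtained by noting that a local parallel partition has $\zeta=\xi=\f$ and hence $\eta=0$. Nothing further is needed.
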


To prove the theorem, we make use of the following codes.
\begin{definition}[\textbf{Error-correcting codes}]
    \label{def:codes}
An $[N,K,d]_q$ code is a mapping $enc:(\mathbb{F}_q)^K\rightarrow (\mathbb{F}_q)^N$, such that the Hamming distance between any two codewords $s_1,s_2 \in Image(enc)\subseteq (\mathbb{F}_q)^N$ is at least $d$. By its definition, an $[N,K,d]_q$ code can correct up to $d-1$ erasures, that is, given a string $s'\in (\mathbb{F}_q)^N$, there is at most one codeword $s\in Image(enc)\subseteq (\mathbb{F}_q)^N$ which equals $s'$ up to at most $d-1$ erasures of symbols in $(\mathbb{F}_q)^N$.
\end{definition}

\begin{lemma}
\label{lemma:code}
Let $n$ and $c$ be some integers.
Let $p$ be a prime number and let $k$ be an integer such that the prime power $q=p^k$ satisfies
$\log q=\log(p^k) = \Theta(\bw b \log ( n))= \Theta(\bw \log |\Sigma|)$,
then there exists an $[N,K,d]_q := [n,n/c,\frac{c-1}{c}n+1]_q$ code. 
\end{lemma}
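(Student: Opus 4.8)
The plan is to instantiate the code as a Reed--Solomon code, the canonical maximum-distance-separable family, whose parameters match the requirement exactly. Recall that over $\mathbb{F}_q$, for any block length $N \le q$ and dimension $K \le N$, one fixes $N$ distinct evaluation points $\alpha_1,\dots,\alpha_N \in \mathbb{F}_q$ and maps a message $(m_0,\dots,m_{K-1}) \in (\mathbb{F}_q)^K$ to $\bigl(P(\alpha_1),\dots,P(\alpha_N)\bigr)$, where $P(x)=\sum_{i=0}^{K-1} m_i x^i$. Since two distinct polynomials of degree at most $K-1$ agree on at most $K-1$ of the $\alpha_j$, any two distinct codewords differ in at least $N-(K-1)=N-K+1$ coordinates, so this is an $[N,K,N-K+1]_q$ code meeting the Singleton bound, and it is efficiently encodable and decodable. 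Taking $N := n$ and $K := n/c$ --- assuming, as elsewhere in the paper, that $c \mid n$; if not, round $K$ down, which only increases the distance --- gives $d = N-K+1 = n - n/c + 1 = \frac{c-1}{c}n + 1$, the claimed value. A received word with at most $d-1 = \frac{c-1}{c}n$ erased symbols still has at least $N-(d-1) = K$ intact symbols, which determine the degree-$<K$ polynomial, hence the codeword, uniquely; so the code corrects $d-1$ erasures as demanded by Definition~\ref{def:codes}.

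The only hypothesis of the lemma that gets used is that enough evaluation points exist, i.e., $q \ge N = n$. This is immediate from $\log q = \Theta(\bw b \log n)$: since $\bw = c \ge 1$ and $b \ge 1$ are integers, and we may take the hidden constant to be at least $1$, we have $q = n^{\Theta(cb)} \ge n$. (The field size is stated with a $\Theta$ precisely to leave room both for this inequality and for the $\Theta(cb\log n)$-bit packing of alphabet symbols used when the code is applied in the main construction; the existence of a prime power $q=p^k$ in this range is standard --- e.g.\ $p=2$ with $k = \Theta(cb\log n)$ --- and is presupposed by the lemma statement rather than being part of its content.)

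I do not expect a genuine obstacle here: this is a textbook instantiation of Reed--Solomon codes, and the whole argument reduces to (i) recalling the MDS property and (ii) the one-line check $q \ge n$. The only points warranting a moment's care are the benign integrality issue around $n/c$ noted above, and the exact reading of Definition~\ref{def:codes} --- as written it says ``distance at most $d$'' whereas the standard (and, as the erasure clause there confirms, intended) notion is ``at least $d$'' --- but under the intended reading the Reed--Solomon code above is exactly an $[n,n/c,\frac{c-1}{c}n+1]_q$ code, and in any case it delivers the only property used downstream, namely correcting $\frac{c-1}{c}n$ erasures.
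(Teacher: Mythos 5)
Your proposal is correct and follows exactly the paper's route: the paper likewise invokes the classical Reed--Solomon construction, noting that $q \ge n$ guarantees enough evaluation points and citing standard references for the MDS property $d = N-K+1$. You simply spell out the textbook details (and flag the benign integrality and ``at most $d$'' issues) that the paper leaves to the citation.
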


\begin{proof}
    The existence of a code with these parameters ($d=N-K+1$) is a classical result in coding theory; in particular, since $q \geq n $ (\emph{i.e.,} since the field size is bigger than the length of the code) we have that the classical construction of Reed-Solomon codes suffices. See Section~6.8 of \cite{vanlint98} or Section~5.2 of \cite{Huffman_Pless_2003}. 
\end{proof}

\begin{remark}
Note that if we encode $n b\log{n}$ bits of information then the length of a codeword is $c  n b\log{n}$. However, we consider the codeword as having $n$ symbols, each one held by a different node in the \fclique model. Since faults are in terms of nodes,  we lose exactly $c b\log{n}$ bits of information per fault, and so when we consider the codeword as having $n$ symbols, the size of each symbol in the alphabet becomes $cb\log{n}$.
\end{remark}

\begin{proof}[Proof of Theorem~\ref{thm:circ_to_clique_coded}]
~\\\textbf{Construction of the algorithm:}
We construct $\mathcal{A}$ as follows:    
    \begin{enumerate}
  
        \item We associate node $w$ with the gates in part $P_{0,w}$. In the first $O(1)$ quiet rounds, the nodes shuffle their data so that each node $w$ holds the inputs to its gates. This is possible by Lenzen's routing scheme\cite{Lenzen13}.

        \item In the subsequent $c=O(1)$ quiet rounds, each node $w$ encodes all of its input and sends a coded piece to each of the other nodes, as follows:
        denote by {$\g (\_ )$} multiplication on the right by the $ K \times n$ generator matrix corresponding to the code given by Lemma~\ref{lemma:code}, and define the codeword $\widetilde P_{0,w} = \g (P_{0,w}) \in \mathbb{F}^n_q$. 
        Note that the total number of bits in a codeword is at most $\bw n|\Sigma|$, since the size of the input of each node is $n|\Sigma|$ and the first layer is the identity function by Definition~\ref{def:lay_circ}.  
        We refer to the action of encoding the output wires of a part and splitting the pieces of the codeword to all nodes as \emph{checkpointing} this information.
        Thus, at this point, all nodes have checkpointed the 0-th layer of the circuit.

    \item We now recursively describe the method by which the nodes use the checkpointed values 
    $\widetilde{{P}}_{\ell ,w}^{(k)}$ of a layer $\ell$ to compute the values in ${P}_{\ell +1 ,w}$ and then checkpoint them (but notice that from this point on, we are not guaranteed to have quiet rounds).
    The method of collecting coded checkpoints can be thought of as filling out ``bingo cards'', which represent the tasks of possibly faulty \emph{virtual} nodes; 
    in particular, 
    we now have that the data of node $w$ lives ``up in the cloud'' and obtaining data from $w$ is replaced with obtaining coded pieces 
    from the entire network.

   \begin{enumerate}[label=\roman*.]
        \item (\textbf{Communication}) 
        Assume that the nodes have checkpointed all of the data from the previous round so that every node has a coefficient of the codeword $\widetilde{P}_{\ell, v}^{(k)} := \g( P_{\ell,v}^{(k)}) \in \mathbb{F}_{q}^{n}$, where $k\in [hn^{\zeta}]$ (so there are $hn^\zeta$ codewords of length $n$). The reason that there are $hn^\zeta$ such codewords is because $n^\zeta$ is the computation locality (see Definition \ref{def:locality}).
                In particular, node $u$ has the value
            $( \widetilde{P}_{\ell ,v}^{(k)})_u \in \mathbb{F}_{q}^{n}$.
            
                Each node $w$ collects the pieces 
                \begin{equation}
                \widetilde{ \mathrm{bin}}({P}_{\ell+1,w}) := \{ \g(P_{\ell,v}^{(k')}) \mid P_{\ell,v}^{(k')} \in \mathrm{bin}({P}_{\ell+1 ,w}) \} 
                \end{equation}
that they need for computing the gates contained in their part ${P}_{\ell+1,w}$, where $k'\in [h'n^{\xi}]$ because $n^{\xi}$ is the communication locality (see Definition \ref{def:locality}). In particular, $w$ collects ${P}_{\ell,v_1}^{(k_1)}$ in a first set of $c$ rounds, ${P}_{\ell,v_2}^{(k_2)}$ in a second set of $c$ rounds, ..., and ${P}_{\ell,v_{h'n^\xi}}^{(k_{h'n^\xi})}$ in a $h'n^\xi$-\emph{th} set of $c$ rounds, where $h'$ is the constant in the definition of communication locality. Therefore, this takes $O(cn^\xi)$ rounds.

                \item (\textbf{Computation}) If for every part $P_{\ell,w}$ 
                in this layer, all output wires go to the part $P_{\ell+1,w}$ in the next layer (and not to any $P_{\ell+1,v}$ for some $v\neq w$), we call this layer a \emph{computation layer}. Otherwise, we call the layer a \emph{communication layer}.
                
                If $\ell$ is a computation layer, the node $w$ performs the computation corresponding to the gates in $P_{\ell+1,w}$ locally.
                The node $w$ continues to do these computations locally until there is a layer $\tau := \ell +t $ for some $t$, in which either there is an output wire that goes from a part $P_{\tau,w}$ for some $w$ into a part $P_{\tau+1,v}$ for some $v\neq w$ (that is, $\tau$ is a communication layer), or $\tau$ is the output layer. 
                The round complexity of this step is 0, because these are all local computations.

                \item (\textbf{Checkpointing}) 
                Upon arriving at some communication layer $\tau$, each node $w$ checkpoints and sends each node $u$ the value $( \widetilde{P}_{\tau,w}^{(k)})_u:= (\g (P_{\tau,w}^{(k)}))_u \in \mathbb{F}_{q}^{n}$ for $k\in [h n^\zeta]$ (since $n^\zeta$ is the computation locality).
Checkpointing a code with these parameters requires $O(c)$ rounds, since we have $n$ symbols of size $\log{q}=O(cb\log{n})$ bits. Thus, this completes in $O(cn^\zeta)$ rounds. 

~\\We refer to a non-faulty execution of steps (i)-(iii) as an \emph{epoch}. We have that the complexity of these steps is $O(c(n^{\zeta}+n^{\xi}))$.\\

                \item (\textbf{Repeat/Fill out Missing Bingo Cards}) If all of the parts in a layer have been checkpointed, then the nodes move on to the next epoch; otherwise, they divide the work of the failed nodes and simulate the computations of the gates in their parts until all are checkpointed. In particular, if $F$ nodes failed so far, and of those $F$, there are $F^{*}$ nodes whose parts are yet to have been checkpointed, the $n  - F$ non-faulty nodes evenly divide themselves and simulate those $F^{*}$ parts by rerunning steps (i)-(iii). This is done as follows.
                \begin{enumerate}[label=\alph*.]
                \item  If $F^{*} > n- F$, then each of the $n-F$ non-faulty nodes is assigned to one failed node whose part has not been checkpointed yet, and repeats steps (i)-(iii) for that part. 
                    
                \item Otherwise,
                there are $F^{*} \leq n- F$ such parts $P_{\tau,v_1 }, P_{\tau, v_2},...,P_{v_{F^{*}}}$. 
                We batch these into sets of size at most $6c$, and let each batch be simulated by multiple nodes, as follows. We denote $F^{*}_{batch} = \max\{\lfloor{F^{*}/3c}\rfloor, 1\}$, and $\mult = \lfloor{\frac{n-F}{F^{*}_{batch}}}\rfloor$. We denote by $x$ the remainder when dividing $n-F$ by $F^{*}_{batch}$ (thus, $\mult = \frac{n-F-x}{F^{*}_{batch}}$). The non-faulty nodes, $w_0,\dots, w_{n-F}$, are split into $F^{*}_{batch}$ batches:\\
                $$R_{\{v_0,v_1,\dots,v_{3\bw-1}\}} := \{w_0,\dots,w_{\mult-1}\},$$ 
$$R_{\{v_{3\bw},v_{3\bw+1},\dots,v_{6\bw-1}\}} :=\{w_{\mult},\dots,w_{2\mult -1}\}, \dots,$$ 
$$R_{\{v_{\left(F^{*}_{batch}-1\right)3\bw },
v_{\left(F^{*}_{batch}-1\right)3\bw+1},
\dots,v_{F^{*}-1} \}} :=\{w_{(F^{*}-1)\mult },\dots,w_{{n-F-x}}\}$$
Notice that since $F^{*}_{batch}$ is defined using the floor of the respective ratio, the size of the last batch may be larger than $3c$, but it is at most $6c$. Further, in case $F^{*} $ is less than $3\bw$ then there is only one batch, which may also be small, but it is assigned to all non-faulty nodes. 
For each batch, every node in that batch is assigned to each  $v_j$ in the batch, and repeats steps (i)-(iii) for the corresponding part $P_{\tau,v_j}$. 
               \end{enumerate}
                
                We call each iteration of these steps an \emph{attempt}.
                If all of the parts have been checkpointed and $\tau$ is the output layer, then the algorithm halts. Note that additional failures can occur throughout repeating.
                Also, we stress that although the algorithm describes exchanging messages between any two nodes, whenever it attempts to send (receive) a message to (from) a failed node, such a message is not delivered.

                The complexity of this step is equal to the complexity of $O(c(n^{\zeta}+n^{\xi}))$ rounds for steps (i)-(iii), plus an additive overhead of $O(cn^\zeta)$ rounds which corresponds to the 
                rounds needed to decode the information corresponding to the node that is being simulated. In case (b), since the size of a batch is $O(c)$, this introduces an overhead of $O(c)$ over that round complexity, for a total of at most $O(c^2(n^{\zeta}+n^{\xi}))$ rounds per attempt. The number of attempts is bounded in what follows.
                \end{enumerate}
   \end{enumerate}

\textbf{Correctness and complexity of the algorithm despite faults:}
It is immediate from the construction that the algorithm produces the output of the circuit. Further, it is straightforward that the number of quiet rounds is $O(1)$. 

We prove that $R$-decodability holds by induction, with $R=O(c n^{\zeta})=O( n^{\zeta}$ for all epochs except the last, and $R=O(c n^{\mu})=O(n^{\mu})$ for the last.
The base case is straightforward because the computation takes place through quiet rounds: at the end of step (1), each node holds the inputs to its gates. In step (2), each node encodes its input using the code in Lemma \ref{lemma:code}, and thus $R$-decodability holds with $R=O(c)=O(1)$.
For the the inductive hypothesis, assume that epoch $\ell $ has been correctly checkpointed as at most  
$O(n^\zeta)$ codewords of an $[n,\frac{n}{c},\frac{c-1}{c}n+1]_{q}$-code for each node. Then, by the inductive hypothesis, we have that a node can, in $O(n^\zeta)$ rounds, obtain all of the checkpointed data it needs and then perform its local computations, and simulate any failed nodes. Thus, at the end of step 3(iv), which we show below that it indeed checkpoints the computation of all nodes (also faulty nodes), we have that epoch $\ell+1 $ is  correctly checkpointed as at most  
$O(n^\zeta)$ codewords of an $[n,\frac{n}{c},\frac{c-1}{c}n+1]_{q}$-code for each node. Thus, $R$-decodability holds with $R=O(c n^{\zeta})=O( n^{\zeta})$. For the last epoch, we only need to encode the output and therefore we get $R$-decodability with
$R=O(c n^{\mu})=O( n^{\mu})$. 

It remains to bound the number of attempts needed for step 3(iv), which will also prove that indeed it checkpoints the computation of all nodes (also faulty nodes).
For an attempt, recall that we denote by $F$ the number of nodes that failed so far (and thus the number of non-faulty nodes is $n-F$), and by $F^{*}$ the number of parts that are yet to be checkpointed. Let $F'$ be the additional number of nodes that fail throughout this attempt. We consider the two possible cases, depending on how $F^{*}$ relates to $n-F$.

In the first case, recall that if $F^{*} > n-F$ then each of the non-faulty nodes is assigned to one failed node whose part has not been checkpointed and repeats its computation. Since there are at least $n/c$ non-faulty nodes even despite the additional $F'$ newly failed ones, we have that at least $n/c$ additional tasks get checkpointed in this attempt. Thus, such an attempt can happen at most $c$ times before all $n$ parts are checkpointed for this epoch.

In the second case, we have that $F^{*}\leq n-F$. Recall that each part that is not yet checkpointed is now attempted to be checkpointed by a multiplicity of $\mult$ or $\mult-1$ nodes, where $\mult=\left\lceil\frac{n-F}{F^{*}}\right\rceil \geq 1$. There are two possible sub-cases based on how the value of $F'$ relates to the remaining number of allowed failures $\remainF = (\frac{c-1}{c})\cdot n - F$. It holds that either $F' \geq \remainF\cdot \frac{1}{2}$ or $F' < \remainF\cdot \frac{1}{2}$. The former case can happen at most $\log{((\frac{c-1}{c})\cdot n)} = O(\log{n})$ times before $\remainF$ drops to 0. 

Thus, suppose that the latter case happens in an attempt.  
If there is a single batch because $F^{*} < 3c$, then all of the batch is performed by all of these nodes, and hence must succeed because at least one of them (at least $n/c$ of them) is non-faulty.

Otherwise, since $x$ is the remainder of dividing $n-F$ by $F^{*}_{batch}$, we have $x\leq F^{*}_{batch}-1$. Thus, the actual number of nodes that are successful in checkpointing the tasks they are now in charge of is at least $n-F-F'-x$. 
Possibly, every $\mult = \lfloor{(n-F)/F^{*}_{batch}}\rfloor$ of them are performing the same batch, but even in this worst case,  the number of distinct newly checkpointed batches is at least $(n-F-F'-x)/\mult$. Note that with the notation of $x$, we have that $\mult = (n-F-x)/F^{*}_{batch}$. Further, it holds that $F^{*}_{batch} = \max\{\lfloor F^{*} / (3\bw)\rfloor,1\} \leq \frac{n}{3\bw} \leq (n-F)/3$, where the first inequality is since $F^{*} < n$, and the second is since $n/c \leq n-F$ because the number of non-faulty nodes can never go below $n/c$.

We bound this number of newly checkpointed batches $(n-F-F'-x)/\mult$ in terms of the total number of remaining batches $F^{*}_{batch}$, as follows. 
We have that:
\begin{align*}
\frac{n-F-F'-x}{\mult} 
&= (n-F-F'-x)/\frac{n-F-x}{F^{*}_{batch}}\\
&= F^{*}_{batch}\cdot\frac{n-F-F'-x}{n-F-x}
= F^{*}_{batch}\cdot(1-\frac{F'}{n-F-x}).
\end{align*}
Since $F'<F_{remain}/2$, and by plugging $F_{remain} = (\frac{c-1}{c})n-F$, we get:
\begin{align*}
\frac{n-F-F'-x}{\mult} 
&\geq F^{*}_{batch}\cdot(1-\frac{\remainF\cdot \frac{1}{2}}{n-F -x})
= F^{*}_{batch}\cdot(1-\frac{((\frac{c-1}{c})\cdot n -  F)\cdot \frac{1}{2}}{n-F -x}).
\end{align*}
Further simple algebraic manipulations give:
\begin{align*}
\frac{n-F-F'-x}{\mult} 
&\geq F^{*}_{batch}\cdot\left(1- \frac{  n - F  - n/c}{ 2 ( n -F - x)}\right)\\
&= F^{*}_{batch}\cdot\left(1- \frac{  (n - F)  - n/c}{  2(n -F) -2x}\right)
 =F^{*}_{batch}\cdot\left(1- \frac{  1  - \frac{n/c}{n-F}}{  2 -\frac{2x}{n-F} }\right).
\end{align*}
Removing the $\frac{n/c}{n-F}$ term from the nominator only decreases the value of the expression, as well as using $x\leq F^{*}_{batch}-1$. We thus have:
\begin{align*}
\frac{n-F-F'-x}{\mult} 
&\geq F^{*}_{batch}\cdot\left(1- \frac{  1  }{  2 -\frac{2(F^{*}_{batch}-1)}{n-F}  }\right).
\end{align*}
Since $\frac{n-F}{F^{*}_{batch}} \geq 3$, we finally have:
\begin{align*}
\frac{n-F-F'-x}{\mult} 
&\geq F^{*}_{batch}\cdot\left(1- \frac{  1  }{  2 - \frac{2(F^{*}_{batch}-1)}{3F^{*}_{batch}}  }\right)
\geq F^{*}_{batch}\cdot\left(1- \frac{  1  }{  2 -\frac{2}{3}  }\right) = F^{*}_{batch} \cdot (1/4).
\end{align*}
This yields that after this subcase, the number of batches to be checkpointed drops from $F^{*}_{batch}$ to at most 
$(1/4)\cdot F^{*}_{batch}$.
This implies that this can happen at most $O(\log{n})$ times before all $n$ parts are checkpointed. 

    Thus the algorithm has all the parts checkpointed in $O(c+\log{n})$ attempts. Each attempt runs in either $O(c\cdot (n^\zeta+n^{\xi}))$ or $O(c^2\cdot (n^\zeta+n^{\xi}))$ rounds, where the latter number of rounds occurs only in the case of batching, which can happen at most $O(\log{n})$ attempts. Hence, we obtain a total of $O(c^2\cdot(n^\zeta+n^{\xi})\log{n})$ rounds per epoch. 
    In the final epoch, the nodes checkpoint the outputs in $O(c^2  n^{\mu} \log n)$ rounds. This is done only once, and thus incurs only an additive overhead.
    In total, since the depth of the circuit is $n^{\de}$, we get a running time of $O(c^2\cdot (n^{\de+\f + \eta} + n^{\mu}) \log{n})$, where $\eta=\max\{\zeta-\f, \xi-\f\}$, as claimed.
    \end{proof}

\subsection{Computing a Circuit in the Non-Faulty \clique}
\label{appendix:non-faulty}

As promised, we now show that without faults, the circuit can be computed by a \clique algorithm with an $O(n^{\de +\f})$ round complexity.

\begin{lemma}\label{lem:circ_to_clique}
    If $\mathcal{C}$ is a layered circuit of depth $n^\de$ with alphabet size $|\Sigma| = b\log(n)$ that has a $n^\f$-parallel partition, 
    then there is an algorithm $\mathcal{A}$ in the \clique model that computes the function $f^\mathcal{C}$ in $O(n^{\de +\f})$ rounds. 
\end{lemma}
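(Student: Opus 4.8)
\textbf{Proof proposal for Lemma~\ref{lem:circ_to_clique}.}

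The plan is to compute the layered circuit one layer at a time, assigning node $w$ the task of evaluating the gates in part $P_{\ell,w}$ for each layer $\ell$, so that after processing layer $\ell$ every non-faulty node holds the outputs of the gates in its part. Since there are no faults, this is simpler than Theorem~\ref{thm:circ_to_clique_coded}: there is no coding, no checkpointing, no simulation of failed nodes, and no epochs. The only thing to argue is the per-layer communication cost, which we claim is $O(n^{\f})$ rounds, giving a total of $O(n^{\de+\f})$ rounds over all $n^\de$ layers.

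First I would set up the base case: in the quiet-free \clique model there is no shuffling issue as in the faulty case; we simply use Lenzen's routing scheme~\cite{Lenzen13} (or even just assume the natural input partition) so that after $O(1)$ rounds node $w$ holds the inputs to the gates in $P_{0,w}$. For the inductive step, suppose at the start of processing layer $\ell\to\ell+1$ node $v$ holds all outputs of gates in $P_{\ell,v}$. Node $w$ needs, as inputs to its gates in $P_{\ell+1,w}$, exactly the outputs of those gates in $\cup_v P_{\ell,v}$ that have a wire into $P_{\ell+1,w}$. By the definition of the $n^{\f}$-parallel partition (Definition~\ref{def:par_part_graph}), the restriction of the refinement to the bipartite subgraph $V_\ell\cup V_{\ell+1}$ is a parallel partition of size $n$ with block fan size $n^{1+\f}$, which means: the number of wires from $P_{\ell,w}$ to parts other than $P_{\ell+1,w}$ is $O(n^{1+\f})$ (bounding $|E_w^L|$), and symmetrically the number of wires into $P_{\ell+1,w}$ from parts other than $P_{\ell,w}$ is $O(n^{1+\f})$ (bounding $|E_w^R|$). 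Hence each node has to send $O(n^{1+\f})$ symbols and receive $O(n^{1+\f})$ symbols across the cut; wires staying within the same node's parts require no communication. Since each node can send/receive $n$ symbols of $\Sigma$ per round to/from the network in total (bandwidth $b\log n$ to each of $n$ nodes), and both the out-degree and in-degree of each node in this communication pattern are $O(n^{1+\f})$, Lenzen's routing scheme delivers all the required symbols in $O(n^{\f})$ rounds. Then each node locally evaluates its gates in $P_{\ell+1,w}$, establishing the invariant for layer $\ell+1$.

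The main obstacle — really the only nontrivial point — is to make precise that the communication demanded by one layer transition is an instance of a routing problem that Lenzen's scheme handles in $O(n^{\f})$ rounds: one must observe that the bound on block fan size controls \emph{both} the total number of messages each node sends \emph{and} the total number it receives (the $E^L$ and $E^R$ conditions, respectively), and that Lenzen's routing scheme routes any such instance, where every node is the source of at most $O(n^{1+\f})$ and the destination of at most $O(n^{1+\f})$ unit messages, in $O(n^{\f})$ rounds. Summing $O(n^\f)$ over the $n^\de$ layers and adding the $O(1)$ for the base case gives $O(n^{\de+\f})$, with correctness of the final output following from the recursive definition of $f^\mathcal{C}$ in Definition~\ref{def:lay_circ} and a straightforward induction on $\ell$.
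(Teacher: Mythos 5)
Your proposal is correct and follows essentially the same route as the paper: a layer-by-layer simulation in which node $w$ evaluates the gates of $P_{\ell,w}$, with the per-layer cost of $O(n^{\f})$ rounds coming from the block fan size bound of the $n^{\f}$-parallel partition, summed over the $n^{\de}$ layers. If anything, you are more explicit than the paper's (very terse) proof in noting that the $E^L$ and $E^R$ conditions bound both the send and receive loads and that Lenzen's routing scheme handles such an instance in $O(n^{\f})$ rounds, which is exactly the intended argument.
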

\begin{proof}
    We construct the \clique algorithm $ \mathcal{A}$ similarly to the proof of Theorem~\ref{thm:circ_to_clique_coded}, with the exception that we remove all of the (technically demanding) steps pertaining to fault tolerance (\emph{i.e.,} we don't checkpoint). Formally, we omit steps (iii) and (iv) and slightly modify the communication step (i) since we do not encode/decode, as follows (the computation step (ii) remains the same):

i. (\textbf{Communication}) Assuming that the nodes correctly computed all of the gates in the previous layer. Each node $w$ obtains the values of the gates from layer $\ell$ needed to compute all of the values of its gates in $\ell +1$:
                \begin{equation*}
                {P}_{\ell+1,w} := \left\{  v \in V_ \ell \mid( \exists v' \in {P}_{\ell+1,w})\  (( v , v')  \in E_{\ell, \ell+1 } ) \right\} .
                \end{equation*}
The gates $v \in {P}_{\ell+1,w} $ replace the gates $ v\in {P}_{\ell,v}^{(k)}\in\mathrm{bin}({P}_{\ell+1 ,w}) $ in the proof of Theorem~\ref{thm:circ_to_clique_coded}. 
The values in layer $\ell+1 $ can be computed in $O(n^\f)$ rounds since $w$ needs to receive and send at most $O(n^{1+\f})$ data by the definition of parallel partition (Definition~\ref{def:par_part_graph}).

    ~\\The proof of correctness is a straightforward induction on the layers (as in the proof of Theorem~\ref{thm:circ_to_clique_coded}) and since there are $O(n^\de)$ epochs which finish in $O(n^\f)$ rounds each, we have the claimed round complexity.
\end{proof}

\subsection{Coding is Essential}
\label{appendix:coding-is-essential}

We establish that coding is essential in the $\bw$-\fclique model, in the following sense.

\begin{definition}[\textbf{Hamming distance, Incompressible functions}]
The \emph{Hamming distance} between two strings $s,t \in \Sigma^p $ is the number of indices in $[p]$ in which they differ.
The $\epsilon$-ball $B_{\epsilon}(s)$ around a string $s$ is the set of strings with Hamming distance at most $\epsilon$ from it.

    Let $p$ be a multiple of $n$. 
    Let $s$ be a string in $\Sigma^p $, let $m=\frac{p}{n}$, and denote $s^{(w)} : = s_{wm}...s_{(w+1)m-1}$ for $w\in [n]$.
    We say that a function $f:\Sigma ^{p} \rightarrow \Sigma^ {p} $ is \emph{incompressible} for $w\in [n]$, if it holds that
    \begin{equation*}
       |f(\{t \in B_{m}(s) \mid d(s^{(w)},t^{(w)}) = m \})| \geq |\Sigma|^m.
    \end{equation*}
    That is, the function $f$ takes exactly $|\Sigma|^m$ different values over strings $t$ in the ball $B_{m}(s)$ which differ from $s$ in exactly the $m$ indices of $s^{(w)}$. Informally, changing $s^{(w)}$ alone changes the output of the function.
\end{definition}

\begin{definition}[\textbf{Mega-rounds}]
\label{def:megaRounds}
    We say that $r$ consecutive rounds $t,\dots,t+r-1$ form a \emph{mega-round} if the messages each node sends during these rounds depend only on its state at the beginning of round $t$ (and not on any messages received during these rounds).
\end{definition}

\begin{lemma}\label{lem:must_code}
    Consider a function $f^{\mathcal{A}}$ computed by a clique algorithm $\mathcal{A}$, and assume that it is incompressible for the string that represents the input of a node $w$. Then, if there are $c$ quiet rounds and they form a mega-round, then node $w$ must locally encode all of their data using an MDS (not necessarily linear) code of type $[ n,n/\bw,(\bw -1)n/\bw +1 ]_q$. 
\end{lemma}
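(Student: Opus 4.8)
The goal is to show that the $c$ messages node $w$ sends during the quiet mega-round, when collected across the $n$ recipients, must form an MDS code of type $[n,n/c,(c-1)n/c+1]_q$. First I would set up the right point of view: since the $c$ quiet rounds form a mega-round, every message $w$ sends in these rounds depends only on $w$'s input (its initial state). Group the $c$ symbols that $w$ sends to a fixed node $u$ into one ``super-symbol'' over an alphabet of size $q = |\Sigma|^c$; this is exactly the alphabet size allowed by Lemma~\ref{lemma:code}. The map $\g_w$ sending $w$'s input $P_{0,w}\in\Sigma^n$ (viewed as an element of $(\mathbb{F}_q)^{n/c}$, since $n$ symbols of $\Sigma$ pack into $n/c$ symbols of $\mathbb{F}_q$) to the $n$-tuple of super-symbols $(\text{msg to }u)_{u\in[n]} \in (\mathbb{F}_q)^n$ is a candidate encoding. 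I must show it is injective (so it has ``dimension'' $n/c$) and that its image has minimum distance $(c-1)n/c+1$, i.e.\ it corrects $(c-1)n/c$ erasures.

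\textbf{The adversary-simulation argument.} The heart of the proof is to use the fault model to force the distance. Fix any two distinct inputs $x \neq x'$ for node $w$ that agree outside the block $P_{0,w}$ (formally, two strings $t, t'$ in the ball $B_m(s)$ differing from $s$ exactly on the $m = n$ indices of $s^{(w)}$); by incompressibility of $f^{\mathcal A}$ for $w$, any two distinct such inputs yield distinct outputs $f^{\mathcal A}(t) \neq f^{\mathcal A}(t')$ — and there are $|\Sigma|^n = q^{n/c}$ such inputs, giving $q^{n/c}$ distinct outputs. Now suppose for contradiction that two such inputs $t \neq t'$ produce codewords $\g_w(t), \g_w(t')$ that agree on more than $n/c$ coordinates, i.e.\ differ on at most $(c-1)n/c$ coordinates. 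Consider two executions of $\mathcal A$: one with global input corresponding to $t$ and one corresponding to $t'$. Let the adversary, at the end of the quiet rounds, fail node $w$ and also fail every node $u$ on which $\g_w(t)$ and $\g_w(t')$ differ — that is at most $(c-1)n/c$ nodes, so together with $w$ at most $(c-1)n/c + 1 \le (c-1)n/c + 1$ failures, which (for the purposes of this lemma, which only asserts a structural necessity) is within the $(c-1)/c$ fraction bound when $w$ is counted — actually we need strictly at most $(c-1)n/c$ total, so fail $w$ and the $\le (c-1)n/c - 1$ differing nodes if $w$ itself is among the at most $(c-1)n/c$ positions where... Let me instead phrase it cleanly: the adversary fails $w$ together with the (at most $(c-1)n/c$) other nodes $u$ where the two codewords differ, but reassigns one of those failure slots to $w$ if $w$'s own position is not already a differing coordinate; in either case at most $(c-1)n/c$ nodes fail, leaving $\ge n/c$ non-faulty, so this is a legal adversary.

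\textbf{Deriving the contradiction.} After these failures, in both executions the surviving nodes have received identical information: they never received $w$'s input directly (it failed), and the only messages carrying information about $w$'s input are the quiet-round messages, which on the surviving recipients are identical super-symbols in the two executions (we failed exactly the nodes where they differ, plus $w$). All other nodes have identical inputs in $t$ and $t'$ by construction. Since $\mathcal A$ is deterministic, the two executions are indistinguishable from the surviving nodes' viewpoint for the rest of the computation, so the $R$-decodability condition forces every surviving node to output the same value for, say, position $v_w$ associated with the failed node $w$ in both executions — but by incompressibility $f^{\mathcal A}(t) \neq f^{\mathcal A}(t')$, and in particular they differ on some output coordinate, which some surviving node is required to be able to decode. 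This is the contradiction. Hence $\g_w(t)$ and $\g_w(t')$ must differ in at least $(c-1)n/c + 1$ coordinates, giving minimum distance $d \ge (c-1)n/c+1$. Combined with $|{\rm Image}| = q^{n/c}$ (hence ``dimension'' $K = n/c$) and length $N = n$, the Singleton bound $d \le N - K + 1 = (c-1)n/c + 1$ is met with equality, so the code is MDS of exactly type $[n, n/c, (c-1)n/c+1]_q$; injectivity of $\g_w$ also follows, since two inputs with equal codewords would have distance $0 < d$.

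\textbf{Main obstacle.} The delicate point is the precise bookkeeping of how many nodes the adversary fails and ensuring it stays within the $(c-1)/c$ budget while still killing all coordinates where the two codewords differ plus node $w$ itself — and arguing carefully that the decodability requirement genuinely applies to an output coordinate that distinguishes $f^{\mathcal A}(t)$ from $f^{\mathcal A}(t')$. I would handle this by noting that incompressibility gives $q^{n/c}$ distinct images, so by a counting/pigeonhole argument the map cannot collapse any two of these inputs to codewords within distance $(c-1)n/c$ of each other without violating decodability as above; the erasure-correction capability of the resulting code is then exactly what is needed, and matching the Singleton bound pins down all three parameters. The ``not necessarily linear'' clause is automatic since nothing in the argument used linearity of $\g_w$, only its image size and minimum distance.
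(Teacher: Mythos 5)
Your proof takes essentially the same route as the paper's: view the $c$ quiet-round messages that $w$ sends to each recipient as one symbol of a length-$n$ codeword over an alphabet of size $|\Sigma|^{\bw}$, and derive the distance bound by having the adversary fail $w$ together with the recipients on which two candidate codewords differ, so that the surviving nodes cannot distinguish two inputs of $w$, contradicting incompressibility. Your explicit use of the Singleton bound to pin down the MDS property is a small addition the paper leaves implicit, and your (somewhat tangled) failure-budget bookkeeping in the edge case where $w$ is not itself a differing coordinate is no less careful than the paper's own treatment of that same boundary case.
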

\begin{proof}
     We prove that if $w$ does not encode all of its data then, the rest of the network is not able to distinguish between two of $w$'s states, which contradicts the function $f^{\mathcal{A}}$ being incompressible.
Consider the messages that node $w$ sends to all of the nodes in the first $c$ quiet rounds 
as a word $s_w\in \Sigma^{\bw n}$, 
which can be considered a codeword of an $[n,\frac{n}{\bw},d]_q$-code. 
The proof is completed by showing that the distance of this code must be $d = \frac{c - 1 }{c} n +1 $.
Suppose that the distance $d$ of the code is at most $D = \frac{\bw - 1}{\bw} n $. Then there is an input $x_w$ for which the Hamming ball around its corresponding codeword $B_{D}(s_w)$ contains the codeword $s_w'$ of another input $x_w' \neq x_w$, so that $s_w$ cannot be uniquely decoded (or distinguished from $s_w'$) given $D$ failures. 
If the adversary fails $w$ along with $\frac{c-1}{c}n - 1$ other nodes in the first noisy round, then considering $s_w$ as a codeword we have more than $d$ erasures in it. 
This means that there are two states (\emph{i.e.,} inputs) of $w$ that are indistinguishable for the rest of the network, so if the nodes correctly compute $f^\mathcal{A}$, this contradicts the assumption that $f^\mathcal{A}$ is incompressible.  
\end{proof}

\section{Application: Semi-Ring Matrix Multiplication}
\label{appendix:semi-ringMM}

\ThmMatrix*

\begin{proof}
We prove the theorem by showing that there is circuit computing this function that satisfies the hypothesis of Theorem~\ref{thm:circ_to_clique_coded}.  
In order to describe our circuit, we first define some auxiliary notation.  

\textbf{Outer Partition:}
The matrices are split as follows: 
$A$ is split into $n^{1/3} \times n^{1/3} $ block matrices of equal dimension $ n^{2/3} \times  n^{2/3}$; \emph{i.e.,}
\begin{equation*}
    A =: \begin{bmatrix}
        A^{1}_1 & \dots & A^{1}_{n^{1/3} }\\ 
  \vdots & \dots & \vdots\\ 
    A^{n^{1/3}}_1 & \dots & A^{n^{1/3}}_{n^{1/3}} \\ 
    \end{bmatrix} =: \begin{bmatrix}
        A_1 & \dots & A_{n^{1/3} }\\ 
    \end{bmatrix}
\end{equation*}
where the $A^{i}_j$ are $n^{2/3} \times n^{2/3}$ matrices, and similarly $B$ is split into ${n^{1/3}} \times {n^{1/3}}  $ block matrices of equal dimension $n^{2/3} \times n^{2/3}$; \emph{i.e.,}
\begin{equation*}
    B =: \begin{bmatrix}
        B^{1}_1 & \dots & B^{1}_{n^{1/3}} \\ 
  \vdots & \dots & \vdots\\ 
    B^{{n^{1/3}}} _{1} & \dots & B^{{n^{1/3}}}_{n^{1/3}}  \\ 
    \end{bmatrix}
    =: \begin{bmatrix}
        B^{1}\\ 
  \vdots \\ 
    B^{{n^{1/3}}} \\ 
    \end{bmatrix}
\end{equation*}
where the $B^{i}_j$ are $n^{2/3} \times n^{2/3}$ matrices. Elementary linear algebra gives us that the product $C = A \cdot B$ satisfies the equation 
$C^{i}_j = \sum_{k \in [{n^{1/3}}]} (A_kB^{k})^{i}_j$,
which can be computed by a total of $
(\text{\# of $k$ in }A_k,B^k)\cdot (\text{\# of $i$ in }A_k^i)\cdot (\text{\# of $j$ in }B^k_j) = 
{n^{1/3}} \cdot {n^{1/3}} \cdot {n^{1/3}}=n$ block (outer) matrix multiplications; indeed,
by the definition of an outerproduct we have that $(A_kB^{k})^{i}_j = A_k^{i}B^{k}_j$, 
which we use to more efficiently distribute the tasks, and so we have:

\begin{equation}
\label{eq:Cij}
C^{i}_j = \sum_{k \in [{n^{1/3}}]} (A_kB^{k})^{i}_j = \sum_{k \in [{n^{1/3}}]} A_k^{i}B^{k}_j
\end{equation}

\textbf{Inner Partition:} 
We define \emph{sub}-blocks $A^i_j[m]$ of the blocks $A^i_j$, each of size $n^{2/3}\times n^{1/3}$, as follows:
\begin{equation*}
    (A_j^i [m])^k_\ell := (A^i_j)^{k}_{\ell+mn^{1/3}}
\end{equation*}
\emph{i.e.,} $A^i_j[m]$ is implicitly defined by the equation $A^i_j = \begin{bmatrix}
        A^{i}_j[1] & \dots & A^{i}_j[n^{1/3}]
    \end{bmatrix} $.
Similarly, we define subblocks  $B^i_j[m]$ of the blocks $B^i_j$, each of size $n^{1/3}\times n^{2/3}$, analogously to the previous construction:
\begin{equation*}
    (B_j^i [m])^k_\ell := (B^i_j)^{k+mn^{1/3}}_{\ell}
\end{equation*}
\emph{i.e.,} $B^i_j[m]$ is implicitly defined by the following equation
\begin{equation*}
    B^i_j = \begin{bmatrix}
        B^{i}_j[1] \\ 
  \vdots \\ 
     B^{i}_j[n^{1/3}]  \\ 
    \end{bmatrix} 
    .
\end{equation*}
In particular, we have that 
\begin{equation}\label{eq:out_p_cube}
 A^{i}_kB^{k}_j =
 \sum_{m \in [n^{1/3}]} A^{i}_k[m]B^{k}_j[m],
\end{equation}
which by Equation \ref{eq:Cij} further implies that 
\begin{equation*}
C^i_j =  \sum_{k \in [n^{1/3}]} A^{i}_kB^{k}_j =
 \sum_{k,m \in [n^{1/3}]} A^{i}_k[m]B^{k}_j[m].
\end{equation*}
We will also use the fact that
\begin{equation}\label{eq:second_mult_check}
(C^i_j)^\ell_m =  \sum_{k \in [n^{1/3}]}( A^{i}_kB^{k}_j)^\ell_m,
\end{equation}
for the last layer of our circuit. 

\begin{figure}[htbp]
            \centering
            \includegraphics[trim={1cm 5.25cm 2cm 6.5cm},clip,width=1\linewidth]{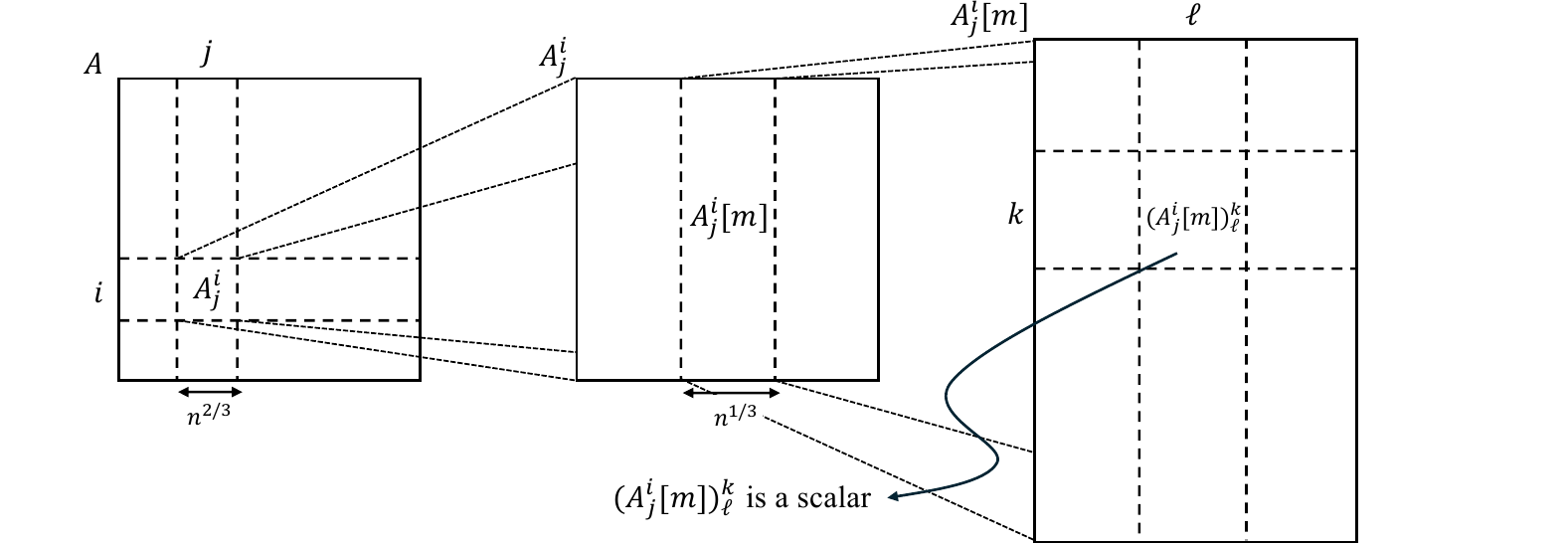}
            \caption{Illustration for the inner and outer partitions, demonstrated on matrix $A$.}
            \label{fig:MatrixMultiplication}
        \end{figure}

~\\\textbf{\underline{Construction of the circuit}}
We construct every layer of the circuit by considering it as having $n$ parts (sets of gates). 
We associate each part $w$ with a 3-tuple of indices 
$(w_1,w_2,w_3) \in [n^{1/3}]^3$.

\textbf{First Layer (Input/Shuffle):} 
We make a part responsible for the entries of its 3-tuple over all of the blocks in the very first shuffling layer; in particular, 
we assume that in the initial shuffle step, the data has been rearranged so that part $w$ now has incoming wires corresponding to
$
(A^{w_1}_{w_2}[w_3])^i_j
$
and
$
(B^{w_1}_{w_2}[w_3])^j_k
$
for all $i,j,k$. 
We can formalize this as the following: 
  the wires of the gates in layer 0 shuffle data to the parts corresponding to $w_0,w_1,w_2$ to part $w$ in layer 1 as given above; \emph{i.e.,} there are $2n$ gates in part $w$, which are:
  \begin{equation*}
P_{0,w} := \left\{(A^{w_1}_{w_2}[w_3])^i_j, (B^{w_1}_{w_2}[w_3])^j_k \mid i,k \in [{n^{2/3}}],j \in [{n^{1/3}}]\right\} .
\end{equation*}

\textbf{Second Layer (Communication):} 
We define the next communication round by the gates 
\begin{equation}\label{eq:cube_layer_two}
  P_{1,w} := \{(A^{w_1}_{w_2}[v])^i_j, (B^{w_2}_{w_3}[v])^j_k \mid v \in [n^{1/3}], i,k \in [{n^{2/3}}],j \in [{n^{1/3}}]\};
\end{equation}
Each gate has a single input wire from the prior layer. 
There are at most $n^{1/3}$ parts whose indices differ from $w = (w_1,w_2,w_3)$ only on $v$ (\emph{i.e.,} for parts $w' = (w_1,w_2,v)$) and they are each connected by 
exactly $n$ {wires} for a total of $n^{4/3}$ wires; therefore, the partition satisfies Definition~\ref{def:lpp}, with:
\begin{itemize}
    \item (Communication Locality) Fix a part $w = (w_1, w_2, w_3)$ and consider its gates $(A^{w_1}_{w_2}[v])^i_j$ for all $v\in[n^{1/3}], i\in[n^{2/3}], j\in[n^{1/3}]$. There are exactly $n^{1/3}$ parts $w' = (w_1,w_2,v)$, one for each such $v$, which have wires into part $w$ in this layer 
    (and each has precisely $n$ wires to part $w$ in this layer, to its gates $(A^{w_1}_{w_2}[v])^i_j$ for the respective $v$).   
    For $(B^{w_2}_{w_3}[v])^j_k$ the count is similar. This gives a communication locality of $O(n^{1/3})$.
    \item (Computation Locality) We have that every part $P_{0,(w_1,w_2,v)}$ can be  partitioned into exactly 2 sets 
   $P_{0,(w_1,w_2,v)}^{(1)}:=\{(A^{w_1}_{w_2}[v])^i_j \mid i \in [{n^{2/3}}],j \in [{n^{1/3}}]\}$  and $P_{0,(w_1,w_2,v)}^{(2)}:=\{ (B^{w_2}_{w_3}[v])^j_k \mid  k \in [{n^{2/3}}],j \in [{n^{1/3}}]\}$. It is straightforward to see that this satisfies computation locality with $n^\zeta = 1$ (and $h=2$). 
\end{itemize}

\textbf{Third Layer (Computation):} 
In the third layer, part $w$ is constructed so that its gates locally compute $A^{w_1}_{w_2}B^{w_2}_{w_3}$. By saying that a gate in part $w$ locally computes a function, we mean that all of its incoming wires are from part $w$ of the previous layer.
That is, we define the $n^{4/3}$ gates:
\begin{equation*}
    P_{2,w} := A^{w_1}_{w_2}B^{w_2}_{w_3} =
 \sum_{v \in [n^{1/3}]} A^{w_1}_{w_2}[v]B^{w_2}_{w_3}[v],
\end{equation*}
one for each element of the $n^{2/3} \times n^{2/3}$ matrix.
By Equation~\ref{eq:out_p_cube} and the definition of the gates in Equation~\ref{eq:cube_layer_two}, we get that indeed these are the values that are computed by the gates in this layer.

\textbf{Fourth Layer (Communication):} 
We define the next communication round by the sets 
\begin{equation}\label{eq:cube_layer_four}
P_{3,w} := \{(A^{w_1}_{v}  B^{v}_{w_3})^{w_2n^{1/3}+i}_{j} \mid  v,i \in [{n^{1/3}}],j \in [{n^{2/3}}]\};
\end{equation}

\begin{itemize}
    \item (Communication Locality) For $(A^{w_1}_{v}  B^{v}_{w_3})^{w_2n^{1/3}+i}_{j} $ there are at most $n^{1/3}$ parts $w' = (w_1,v,w_3)$ that have the information corresponding to $A^{w_1}_{v}  B^{v}_{w_3}$ and furthermore there are precisely $n$ input wires for each one, since $i,j$ have in total a range of $n$.  
    \item (Computation Locality) The parts in the third layer that have outputs wires are all of size $n^{4/3}$ gates, which can each be partitioned perfectly into $n^{1/3}$ sets
    \begin{equation*}
P_{2,(w_1,v,w_3)}^{(w_1,w_2,w_3)} := \{(A^{w_1}_{v}  B^{v}_{w_3})^{w_2n^{1/3}+i}_{j} \mid  v,i \in [{n^{1/3}}],j \in [{n^{2/3}}]\}
    \end{equation*}
   (one for each of the $n^{1/3}$ choices for $w_2$). Each set is of size $n$ and these sets, $P_{2,(w_1,v,w_3)}^{(w_1,w_2,w_3)}$, are all of the subparts corresponding to $(w_1,v,w_3)$. 
   Thus, we have that $n^\zeta = n^{1/3}$.
\end{itemize}

\textbf{Fifth Layer (Computation/Output):} 
Now we can have gates of part $w$ in the fifth layer that can locally (with incoming wires from the fourth layer only from part $w$) compute 
\begin{equation*}
    P_{4,w} := \{(C^{w_1}_{w_2})^{w_3n^{1/3}+i}_j,  \mid i \in [{n^{1/3}}], j \in [{n^{2/3}}]\} =  
    \left \{\sum_{v \in [n^{1/3}]}(A^{w_1}_{v}B^{v}_{w_3})^{w_3n^{1/3}+i}_j \mid i \in [{n^{1/3}}], j \in [{n^{2/3}}] \right\}.
\end{equation*}
By Equation~\ref{eq:second_mult_check} and the definition of the gates in Equation~\ref{eq:cube_layer_four}, we get that indeed these are the values that are computed by the gates in this layer.

~\\This completes the proof that the circuit correctly computes the product of the matrices. The circuit has a constant depth, and an $O(n^{1/3})$-parallel partition. By Theorem~\ref{thm:circ_to_clique_coded}, it can be computed in the $\bw$-\fclique model within $O( c^2n^{1/3}\log{n})$ rounds, with $O(1)$ quiet rounds and $O(1)$-decodability.
\end{proof}

\section{Leveraging a Sublinear Bound on the Number of Faults}
\label{sec:sublinear}
If we have a stronger guarantee of at most $o(n^\chi)$ faults for some $\chi <1$, we can improve the construction. 
Concretely, the lack of locality incurs less overhead, because the worst-case scenario of having to decode \emph{useless} pieces of information now decreases from $n$ to $n^{\chi}$. 
With this in mind, we use a generalized definition of the model, which we refer to as the $(\chi,c)$-\fclique model, in which the adversary may fail $\frac{(c-1)n^{\chi}}{c}$ nodes.     The trick is that while we previously had that the nodes formed a codeword of length $n$, now we make sure that they are partitioned into $n^{1-\chi}$ disjoint codewords of length $n^{\chi}$. 

We  modify the definitions of the locality parameters of a circuit, as follows (the modifications appear in red).

\begin{definition}[\textbf{$\chi$-Computation locality and  $\chi$-Communication locality}]\label{def:alpha_locality}
Let $\mathcal{C} = (V,E) = (\cup_iV_i,E)$ be a layered circuit of depth $n^{\de}$ and an $n^\f$-{parallel partition} with respect to a refinement $\mathcal{P}$ of $V$. Let $P_{i,j}$ denote the $j$-th part of $V_i$ in $\mathcal{P}$.

We say that $\mathcal{C}$ has $\chi$-computation locality $n^\zeta$ and $\chi$-communication locality $n^{\xi}$ 
if there is a constant $h$ such that for all $P_{i,w}$ 
there exists a further (not necessarily disjoint) subdivision 
$\color{red}{P_{i,w} =: \bigcup_{j \in [hn^{\zeta+1-\chi}]}P^{(j)}_{i,w}}$
such that 
$\color{red}{|P^{(j)}_{i,w}| = n^{\chi}}$.
For each $u$, 
the number of pairs $j,w$ such that $w\neq u$ for which $P^{(j)}_{i,w}$ has a wire into $P_{i+1,u}$ is at most 
$\color{red}{h'n^{\xi+1-\chi}}$ for some constant $h'$. We denote by $\bin(P_{i+1,u})$ the parts corresponding to those pairs, that is, the parts $P_{i,w}^{(j)}$ (where $w\neq u$)
in $V_{i}$ 
that consist of at least one gate that has a wire into a gate in the part $P_{i+1,u}$.
\end{definition}

\begin{definition}[\textbf{Local parallel $\chi$-partition}] \label{def:lpp_alpha}
    We say that a layered circuit $\mathcal{C} = (V,E) = (\cup_iV_i,E)$ of depth $n^{\de}$ has an $n^\f$-{local parallel $\chi$-partition}, 
    if it has an $n^\f$-parallel partition 
    for a refinement $V = \cup_{P \in \mathcal{P}}P$
    with a 
    $\chi$-computation locality of $O(n^{\f})$ and a $\chi$-communication locality of $O(n^\f)$.
\end{definition}

Note that having a sublinear bound of $O(n^{\chi})$ faults for $\chi<1$ may allow a circuit to have smaller $\chi$-computation/communication localities compared to the case $\chi=1$. We leverage this to obtain the following.

\begin{theorem}
\label{thm:circ_to_clique_little_o}
[Computing a layered circuit by an $(\chi,c)$-\fclique algorithm]
    Let $\mathcal{C}$ be a circuit of depth $n^\de$ with alphabet size $|\Sigma| = 
    b \log(n)$
    that has an $n^{\f}$-parallel $\chi$-partition with $\chi$-computation locality $n^{\zeta}$ and $\chi$-communication locality $n^{\xi}$. Let $\mu$ be such that $O(n^{1+\mu})$ bounds the max size of the output per part.
    For every constant $\bw$, there is an algorithm $\mathcal{A}$ in the $(\chi,\bw)$-\fclique 
    that computes the function $f_\mathcal{C}$ with $O(1)$ quiet rounds, a (non-quiet) round complexity of $O(c^2(n^{\de+\f+\eta}+n^{\mu})\log{n})$, where $\eta=\max\{\zeta-\f, \xi-\f\}$, and decodability complexity of $R=O( n^{\mu})$.
\end{theorem}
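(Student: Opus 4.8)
\textbf{Proof plan for Theorem~\ref{thm:circ_to_clique_little_o}.}
The plan is to run the proof of Theorem~\ref{thm:circ_to_clique_coded} essentially verbatim, with the single structural change that every length-$n$ codeword is replaced by $n^{1-\chi}$ \emph{disjoint} length-$n^{\chi}$ codewords, and every ``piece of $n$ gates'' is replaced by a ``piece of $n^{\chi}$ gates'' as in the modified Definition~\ref{def:alpha_locality}. Concretely, fix once and for all a partition of the $n$ nodes into $n^{1-\chi}$ groups of size $n^{\chi}$. For the coding we use an $[n^{\chi},n^{\chi}/c,\frac{c-1}{c}n^{\chi}+1]_q$ code with $\log q=\Theta(\bw b\log n)$; this exists by the same Reed--Solomon argument as in Lemma~\ref{lemma:code}, since $q\geq n\geq n^{\chi}$. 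The key observation that makes fixed groups sound is that a single group's code already tolerates $\frac{c-1}{c}n^{\chi}$ erasures, which is the \emph{entire} fault budget of the $(\chi,c)$-\fclique model; hence even if the adversary concentrates all failures inside one group, that group's codewords remain decodable, and any other group has strictly fewer erasures.

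First I would re-describe the three steps of an epoch with the new piece sizes. By the modified Definition~\ref{def:alpha_locality}, each part $P_{\tau,w}$ is subdivided into $hn^{\zeta+1-\chi}$ pieces of $n^{\chi}$ gates, so it is checkpointed as $hn^{\zeta+1-\chi}$ codewords of length $n^{\chi}$; distributing these codewords round-robin over the $n^{1-\chi}$ groups (over all $n$ parts, that is $hn^{\zeta+2-\chi}$ codewords, i.e.\ $hn^{\zeta+1}$ per group) gives each group member one $\Theta(\bw b\log n)$-bit symbol of each, i.e.\ $O(\bw n^{\zeta+1})$ bits to receive, hence $O(\bw n^{\zeta})$ rounds. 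Symmetrically, in the communication step a node $w$ must collect the $h'n^{\xi+1-\chi}$ codewords in $\bin(P_{\tau+1,w})$, each of length $n^{\chi}$ with $\Theta(\bw b\log n)$-bit symbols; by querying whole groups it recovers each codeword from its $\geq n^{\chi}/c$ survivors, for a total of $O(\bw n^{\xi+1})$ bits, i.e.\ $O(\bw n^{\xi})$ rounds. Thus an epoch still costs $O(\bw(n^{\zeta}+n^{\xi}))$ rounds, the computation layers are free as before, and the initial shuffle (Lenzen's scheme) plus input encoding still take $O(1)$ quiet rounds (the input has $O(n)$ symbols per part, hence $O(n^{1-\chi})$ pieces, i.e.\ $O(\bw n)$ bits per node to disseminate, $O(\bw)=O(1)$ rounds).

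Next I would observe that the ``attempts''/batching analysis of step 3(iv) carries over unchanged, and in fact simplifies. The number of parts not yet checkpointed in an epoch, $F^{c}$, is at most the total number of failures, which is at most $\frac{c-1}{c}n^{\chi}$; since the number of non-faulty nodes is at least $n-\frac{c-1}{c}n^{\chi}\geq n/c$, for $n$ larger than a constant depending on $c,\chi$ we always have $F^{c}<n-F$, so case (a) of step 3(iv) never occurs and only the batching argument (case (b)) is needed. Repeating that argument with $\remainF=\frac{c-1}{c}n^{\chi}-F$ in place of $\frac{c-1}{c}n-F$, the attempts with $F'\geq\remainF/2$ number $O(\log(n^{\chi}))=O(\log n)$, and in every other attempt $F'/(n-F-x)=O(n^{\chi-1})=o(1)$, so the number of remaining batches drops by a constant factor; hence $O(\log n)$ attempts per epoch, each costing $O(\bw^{2}(n^{\zeta}+n^{\xi}))$ rounds. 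The decodability argument is identical, giving $R$-decodability with $R=O(\bw n^{\mu})=O(n^{\mu})$ (the output per part is $O(n^{1+\mu})$, hence $O(n^{1+\mu-\chi})$ pieces, $O(\bw n^{1+\mu})$ bits, $O(\bw n^{\mu})$ rounds). Summing over the $n^{\de}$ epochs plus the one-time $O(\bw^{2}n^{\mu}\log n)$ output checkpointing and using $\f+\eta=\max\{\zeta,\xi\}$ yields the claimed $O(\bw^{2}(n^{\de+\f+\eta}+n^{\mu})\log n)$ round complexity with $O(1)$ quiet rounds.

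The only genuinely new thing to be careful about is the bookkeeping I flagged above: checking that splitting each codeword over a group of $n^{\chi}$ nodes, spreading the $\Theta(n^{\zeta+2-\chi})$ resulting codewords evenly over the $n^{1-\chi}$ groups, and accounting for the $O(\bw)$-message blow-up per symbol together still balance against the $n$-messages-per-round bandwidth so as to reproduce exactly the $O(\bw n^{\zeta})$ and $O(\bw n^{\xi})$ per-epoch costs; everything else (correctness of the circuit output, the quiet-round structure, the fault-tolerance/attempts machinery) transfers from the proof of Theorem~\ref{thm:circ_to_clique_coded} with no new ideas.
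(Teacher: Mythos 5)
Your proposal is correct and follows essentially the same route as the paper: fix a predetermined partition of the nodes into $n^{1-\chi}$ groups of size $n^{\chi}$, replace each length-$n$ codeword by disjoint $[n^{\chi},n^{\chi}/c,\frac{c-1}{c}n^{\chi}+1]_q$ codewords held by these groups, and observe that each group's code already tolerates the adversary's entire fault budget, so the epoch/attempt machinery of Theorem~\ref{thm:circ_to_clique_coded} carries over with $n$ replaced by $n^{\chi}$. The paper's own proof is far terser, so your explicit bandwidth bookkeeping and the remark that case (a) of step 3(iv) becomes vacuous are welcome elaborations rather than deviations.
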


\begin{proof}[Proof of Theorem~\ref{thm:circ_to_clique_little_o}]
    We construct the $(\chi,c)$-\fclique algorithm $ \mathcal{A}$ similarly to the one in the proof of Theorem~\ref{thm:circ_to_clique_coded}, with the only difference being that we replace $O(n)$ with $O(n^\chi)$ where necessary:

    \begin{enumerate}
    \item The first $O(1)$ shuffling rounds remain the same.

    \item In the subsequent $c=O(1)$ quiet rounds, each node partitions its input into $n^{1-\chi}$ parts, each of size $n^{\chi}$. For each part, it forms an MDS code with parameters $[n^{\chi}, \frac{n^{\chi}}{c},\frac{c-1}{c} n^{\chi}+1]$.
    It sends one codeword to each of $n^{1-\chi}$ subsets of size $n^{\chi}$ of a predetermined partition of the nodes which they all use. 

        In other words, while we previously had that the nodes formed only one codeword of length $n$, now they are partitioned into $n^{1-\chi}$ disjoint codewords of length $n^{\chi}$. 

    \item The recursive steps remain almost the same, with the exception that the codewords now have parameters $[n^{\chi}, \frac{n^{\chi}}{c},\frac{c-1}{c} n^{\chi}+1]$ and that the $n$ nodes do not hold one codeword but instead are partitioned into holding $n^{1-\chi}$ disjoint codewords of length $n^{\chi}$. In particular, the only thing that changes is that $( \widetilde{P}_{\ell ,v}^{(k)})_u$ is now of size $n^\chi$ and $u$ also takes the more restricted range of $n^\chi$, since this is the length of the codewords. 

\end{enumerate}

    ~\\The proof of correctness remains the same up to accordingly replacing $n$ with $n^\chi$ in appropriate places. 
    \emph{E.g.,} for the the inductive hypothesis, we assume that epoch $\ell $ has been correctly checkpointed as at most  
$O(n^{\zeta+1-\chi})$ codewords of an $[n^\chi,\frac{n^\chi}{c},\frac{c-1}{c}n^\chi+1]_{q}$-code for each node.
In particular, although there are $O(n^{\zeta+1-\chi})$ codewords, 
they can be collected concurrently in ``batches'' of size $n^{1-\chi}$ and, likewise, collecting the data from any of these ``batches'' cannot fail since the adversary is constrained to fail at most $\frac{c-1}{c}n^{\chi}$ nodes.
\end{proof}

\section{Application: Fast (Ring) Matrix Multiplication }
\label{sec:fastMM}
We prove here that matrix multiplication over a ring can be computed by a \fclique algorithm with overhead proportional to its fault tolerance.

\ThmFastMatrix*
\begin{proof}
We prove the theorem by showing that there is circuit computing this function that satisfies the hypothesis of Theorem~\ref{thm:circ_to_clique_little_o}.  
In order to describe our circuit, we first define some auxiliary notation.  

Let $(\alpha , \beta , \gamma) $
$\in R ^{ m^\sigma \times m\times m  } \times R ^{m^\sigma \times m\times m  }\times R ^{ m \times m \times m^\sigma   }$ be a fast matrix multiplication tensor of rank $O(m^\sigma)$; \emph{i.e,} a tensor that given as input two $m\times m$ matrices $X,Y$ over some ring, computes their product as follows: 
\begin{align}\label{eq:ten_def}
\begin{split}
    \widehat{X}_k & = \sum_{i,j \in [m]}\alpha^k_{i,j}X^i_j
    \\ 
    \widehat{Y}_k & = \sum_{i,j \in [m]}\beta^k_{i,j}Y^i_j
    \\ 
    (XY)^i_j &= \sum_{k \in [m^\sigma]} \gamma^{i,j}_{k} \widehat{X}_k \widehat{Y}_k .
\end{split}
\end{align}

\textbf{Outer Partition:}
The matrices are split as follows: 
$A$ is split into $n^{1/\sigma} \times n^{1/\sigma} $ block matrices of equal dimension $ n^{1-1/\sigma} \times  n^{1-1/\sigma}$; \emph{i.e.,}
\begin{equation*}
    A =: \begin{bmatrix}
        A^{1}_1 & \dots & A^{1}_{n^{1/\sigma} }\\ 
  \vdots & \dots & \vdots\\ 
    A^{n^{1/\sigma}}_1 & \dots & A^{n^{1/\sigma}}_{n^{1/\sigma}} \\ 
    \end{bmatrix} 
\end{equation*}
where the $A^{i}_j$ are $n^{1-1/\sigma} \times n^{1-1/\sigma}$ matrices, and similarly $B$ is split into ${n^{1/\sigma}} \times {n^{1/\sigma}}  $ block matrices of equal dimension $n^{1-1/\sigma} \times n^{1-1/\sigma}$; \emph{i.e.,}
\begin{equation*}
    B =: \begin{bmatrix}
        B^{0}_1 & \dots & B^{1}_{n^{1/\sigma}} \\ 
  \vdots & \dots & \vdots\\ 
    B^{{n^{1/\sigma}}} _{1} & \dots & B^{{n^{1/\sigma}}}_{n^{1/\sigma}}  \\ 
    \end{bmatrix}
\end{equation*}
where the $B^{i}_j$ are $n^{1-1/\sigma} \times n^{1-1/\sigma}$ matrices.

Setting $m=n^{1/\sigma}$ in Equation~\ref{eq:ten_def}, we get that 
\begin{equation}
\label{eq:Cij_fast}
\begin{split}
 \widehat{A}_k & = \sum_{i,j \in [n^{1/\sigma}]}\alpha^k_{i,j}A^i_j
    \\ 
    \widehat{B}_k & = \sum_{i,j \in [n^{1/\sigma}]}\beta^k_{i,j}B^i_j.
    \\ 
    C^{i}_j & = \sum_{k \in [n]} \gamma^{i,j}_{k} \widehat{A}_k \widehat{B}_k .
\\
\end{split}
\end{equation}

\textbf{Inner Partition:} 
We define \emph{sub}-blocks $A^i_j[m,p]$ of the blocks $A^i_j$, each of size $n^{1/2-1/\sigma}\times n^{1/2-1/\sigma}$, as follows:
\begin{equation*}
    (A_j^i [m,p])^k_\ell := (A^i_j)^{k+mn^{1/2-1/\sigma}}_{\ell+pn^{1/2-1/\sigma}}
\end{equation*}
\emph{i.e.,} $A^i_j[m]$ is implicitly defined by the equation 
$$
A^i_j = \begin{bmatrix}
        A^{i}_j[1,1] & \dots & A^{i}_j[1,n^{1/2}]\\
        \vdots & \ddots & \vdots\\
        A^{i}_j[n^{1/2},1] & \dots & A^{i}_j[n^{1/2},n^{1/2}]\\
    \end{bmatrix} .
    $$
Similarly, we define \emph{sub}-blocks $B^i_j[m,p]$ of the blocks $B^i_j$, each of size $n^{1/2-1/\sigma}\times n^{1/2-1/\sigma}$, as follows:
\begin{equation*}
    (B_j^i [m,p])^k_\ell := (B^i_j)^{k+mn^{1/2-1/\sigma}}_{\ell+pn^{1/2-1/\sigma}}
\end{equation*}
\emph{i.e.,} $B^i_j[m]$ is implicitly defined by the equation 
$$
B^i_j = \begin{bmatrix}
        B^{i}_j[1,1] & \dots & B^{i}_j[1,n^{1/2}]\\
        \vdots & \ddots & \vdots\\
        B^{i}_j[n^{1/2},1] & \dots & B^{i}_j[n^{1/2},n^{1/2}]\\
    \end{bmatrix} .
    $$
\textbf{Inner Partition of Auxiliary Matrices:} 
We likewise form an (identical) inner partition of the auxiliary matrices of the form 
$$
\widehat A_k= \begin{bmatrix}
        \widehat  A_k[1,1] & \dots &
 \widehat A_k[1,n^{1/2}]\\
        \vdots & \ddots & \vdots\\
        \widehat A_k [n^{1/2},1] & \dots & \widehat A_k [n^{1/2},n^{1/2}]\\
    \end{bmatrix} ,
    $$
$$
\widehat B_k = \begin{bmatrix}
        \widehat B_k [1,1] & \dots & \widehat B_k [1,n^{1/2}]\\
        \vdots & \ddots & \vdots\\
        \widehat B_k [n^{1/2},1] & \dots & \widehat B_k [n^{1/2},n^{1/2}]\\
    \end{bmatrix} ,
    $$
    and
    $$
\widehat A_k\widehat B_k = \begin{bmatrix}
        (\widehat A_k\widehat B_k) [1,1] & \dots & (\widehat A_k\widehat B_k) [1,n^{1/2}]\\
        \vdots & \ddots & \vdots\\
        (\widehat A_k\widehat B_k) [n^{1/2},1] & \dots & (\widehat A_k\widehat B_k) [n^{1/2},n^{1/2}]\\
    \end{bmatrix} .
    $$
    Similarly to Equation~\ref{eq:second_mult_check} we also have that 
\begin{equation}
\label{eq:second_mult_check_fast}
\begin{split}
C^{i}_j [m,p]& = \sum_{k \in [n]} \gamma^{i,j}_{k}\left( \widehat{A}_k \widehat{B}_k\right)[m,p] .
\\
\end{split}
\end{equation}

~\\\textbf{\underline{Construction of the circuit}}
We construct every layer of the circuit by considering it as having $n$ parts (sets of gates). 
We associate each part $w$ with a 2-tuple of indices 
$(w_0,w_1) \in [n^{1/2}]^2$.

\textbf{First Layer (Input/Shuffle):} 
If $\chi < 1-2\sigma$, we simply design the circuit as if the adversary could fail with parameter $\chi = 1-2\sigma$ (certainly if we prepare for \emph{more} faults, then the circuit would still be correct). 
Therefore, we construct the circuit and perform the analysis under the assumption that $\chi \geq  1-2/\sigma$. 
We make a part responsible for the entries of its 2-tuple over all of the blocks in the very first shuffling layer; in particular, 
we assume that in the initial shuffle step, the data has been rearranged so that part $w$ now has incoming wires corresponding to
$
(A^{i}_{j}[w_0,w_1])^k_\ell
$
and
$
(B^{i}_{j}[w_0,w_1])^k_\ell
$
for all $i,j,k$. 
We can formalize this as the following: 
  the wires of the gates in layer 0 shuffle data to the parts corresponding to $w_0,w_1$ to part $w$ in layer 1 as given above; \emph{i.e.,} there are $2n$ gates in part $w$, which are:
  \begin{equation*}
P_{0,w} := \left\{((A^{i}_{j}[w_0,w_1])^k_\ell, (B^{i}_{j}[w_0,w_1])^k_\ell \mid i,j \in [{n^{1/\sigma}}],k,\ell \in [{n^{1/2-1/\sigma}}]\right\} .
\end{equation*}

\textbf{Second Layer (Computation):} 
In the second layer, part $w$ is constructed so that its gates locally compute $(\widehat A_k)[{w_0},{w_1}], (\widehat B_k)[{w_0},{w_1}]$. 
That is, we define the $2n^{2-2/\sigma}$ gates:
\begin{equation*}
\begin{split}
    P_{2,w}^{(0,k)} = \widehat{A}_k [w_0,w_1] = \sum_{i,j \in [n^{1/\sigma}]}\alpha^k_{i,j}A^i_j[w_0,w_1]
    \\ 
    P_{2,w}^{(1,k)} = \widehat{B}_k [w_0,w_1] = \sum_{i,j \in [n^{1/\sigma}]}\beta^k_{i,j}B^i_j[w_0,w_1]
\end{split}
\end{equation*}
for each $k \in [n]$ this is a pair of $n^{1/2-1/\sigma}\times n^{1/2-1/\sigma}$ matrices.

\textbf{Third Layer (Communication):} 
We define the next communication round by the sets 
\begin{equation*}\label{eq:fast_layer_three}
\begin{split}
    P_{3,w} = \left \{ \widehat{A}_w [v_0,v_1]
     , \widehat{B}_w [v_0,v_1] \mid v_0,v_1 \in [n^{1/2}]\right\}
\end{split}
\end{equation*}

\begin{itemize}
    \item (Communication Locality) For $ \widehat{A}_w [v_0,v_1], \widehat{B}_w [v_0,v_1]$ there are $n$ parts $v$ that have the information corresponding to $\widehat{A}_w [v_0,v_1], \widehat{B}_w [v_0,v_1]$ and furthermore there are precisely $2n^{2-2/\sigma}$ input wires for each one, since the dimensions of $ \widehat{A}_w [v_0,v_1], \widehat{B}_w [v_0,v_1]$ is $n^{1-2/\sigma}$ and there $n$ pairs $(v_0,v_1)$.  
    Therefore,  $ n = n^{\xi+1-\chi} 
 $ implies that we have an $\chi$-communication locality of $\xi = \chi$. 
    \item (Computation Locality) The parts in the second layer that have output wires are all of size $2n^{2-2/\sigma}$ gates, which can each be partitioned perfectly into $n$ sets
    \begin{equation*}
 P_{3,w}^{(0,v)} = \widehat{A}_w [v_0,v_1]
    , \ \ 
    P_{3,w}^{(1,v)} = \widehat{B}_w [v_0,v_1]
    \end{equation*}
   (one for each of the $n$ choices for $v$). Each set is of size $n^{1-2\sigma}$ and these sets, $ P_{3,w}^{(0,v)}, P_{3,w}^{(1,v)} $, are all of the subparts corresponding to $w$. 
   Thus, $ n = n^{\zeta+1-\chi} $ implies that we have an $\chi$-communication locality of  $\zeta = \chi $.
\end{itemize}

\textbf{Fourth Layer (Computation):} 
In the fourth layer, part $w$ is constructed so that its gates locally compute $\widehat A_wB_w$. 
That is, we define the $n^{2-2/\sigma}$ gates:
\begin{equation*}
\begin{split}
    P_{4,w} = \widehat{A}_w  \widehat{B}_w. 
\end{split}
\end{equation*}

\textbf{Fifth Layer (Communication):} 
We define the next communication round by the sets 
\begin{equation*}\label{eq:fast_layer_five}
\begin{split}
    P_{5,w} = \left\{(\widehat{A}_v \widehat{B}_v )[w_0,w_1]\mid v_0,v_1 \in [n^{1/2}]\right\}
\end{split}
\end{equation*}

\begin{itemize}
    \item (Communication Locality) For $ (\widehat{A}_v \widehat{B}_v) [w_0,w_1]$ there are $n$ parts $v$ that have the information corresponding to $(\widehat{A}_v  \widehat{B}_v) [w_0,w_1]$ and furthermore there are precisely $n^{2-2/\sigma}$ input wires for each one, since the dimensions of $ (\widehat{A}_v \widehat{B}_v) [w_0,w_1]$ is $n^{1-2/\sigma}$ and there are $n$ nodes $v$.  
    Therefore,  $ n = n^{\xi+1-\chi} 
 $ implies that we have an $\chi$-communication locality of $\xi = \chi$. 
    \item (Computation Locality) The parts in the second layer that have output wires are all of size $n^{2-2/\sigma}$ gates, which can each be partitioned perfectly into $n$ sets
    \begin{equation*}
  P_{5,w}^{(v)} = (\widehat{A}_v \widehat{B}_v )[w_0,w_1]
       \end{equation*}
   (one for each of the $n$ choices for $v$). Each set is of size $n^{1-2\sigma}$ and these sets, $ P_{5,w}^{(v)} $, are all of the subparts corresponding to $w$. 
   Thus, $ n = n^{\zeta+1-\chi} $ implies that we have an $\chi$-communication locality of  $\zeta = \chi $.
\end{itemize}

\textbf{Sixth Layer (Computation/Output):} 
In the sixth layer, part $w$ is constructed so that its gates locally compute $ C^{i}_j [w_0,w_1]$ for all $i,j$. 
That is, we define the $n$ gates:
\begin{equation*}
\begin{split}
    P_{6,w} = \left\{ C^{i}_j [w_0,w_1]  \mid i,j \in [n^{1/\sigma}]\right\} =   \left\{ \sum_{k \in [n]} \gamma^{i,j}_{k}\left( \widehat{A}_k \widehat{B}_k\right)[w_0,w_1]
     \mid i,j \in [n^{1/\sigma}]\right\}.
\end{split}
\end{equation*}

~\\This completes the proof that the circuit correctly computes the product of the matrices. The circuit has a constant depth, and an $O(n^{1-2/\sigma})$-parallel $\chi$-partition. By Theorem~\ref{thm:circ_to_clique_coded}, it can be computed in the $(\chi,\bw)$-\fclique model within $O( c^2n^{\chi}\log{n})$ rounds, with $O(1)$ quiet rounds and $O(1)$-decodability.
In the case where $\chi \leq 1-2/\sigma$, then we get that it can be computed in the $(\chi,\bw)$-\fclique model within $O( c^2n^{1- 2/\sigma}\log{n})$ rounds, with $O(1)$ quiet rounds and $O(1)$-decodability (\emph{i.e.,} we have that $\eta = 0$ in the case where $\chi \leq 1-2/\sigma$).
\end{proof}

\paragraph{Acknowledgments:} We thank Orr Fischer, Ran Gelles, and Merav Parter for useful discussions. This work is supported in part by the Israel Science Foundation (grant 529/23).

\bibliographystyle{alpha}
\bibliography{main.bib}

\end{document}